\documentclass[11pt,a4paper]{article}

\usepackage[margin=2.5cm]{geometry}

\usepackage{amsmath}
\usepackage{amssymb}
\usepackage{amsthm}
\usepackage{mathtools}
\usepackage{graphicx}
\usepackage{enumitem}

\usepackage[hidelinks]{hyperref}

\usepackage{microtype}

\usepackage{tikz}

\usepackage{algorithmicx}
\usepackage{algorithm}
\usepackage[noend]{algpseudocode}

\usetikzlibrary{snakes}

\theoremstyle{plain}
\newtheorem{theorem}{Theorem}

\newtheorem{lemma}[theorem]{Lemma}
\newtheorem{corollary}[theorem]{Corollary}

\theoremstyle{definition}
\newtheorem{definition}[theorem]{Definition}

\theoremstyle{remark}

\newtheorem{fact}[theorem]{Fact}
\newtheorem{remark}[theorem]{Remark}
\newtheorem*{remark*}{Remark}

\newcommand{\OPT}{\mathsf{OPT}}
\newcommand{\NN}{{\mathbb N}}
\DeclareMathOperator{\cost}{cost}

\newcommand{\pare}[1]{\left( #1 \right)} 

\newcommand{\abs}[1]{\left| #1 \right|} 
\newcommand{\floor}[1]{\left\lfloor #1 \right\rfloor} 

\newcommand{\setbuild}[2]{\left\{ #1 \middle| #2 \right\}} 

\title{
From Trees to Binary Trees: Approximation Algorithms and a PTAS
}

\author{\\Fabian Weisbrodt, Florian Werner\\[5mm]
Technical University of Munich, Munich, Germany\\[5mm]
fabian.weisbrodt@tum.de, florian.werner@cit.tum.de
}

\date{}
\begin{document}

\maketitle

\begin{abstract}
\noindent Kellerhals, Krebs, Nichterlein, and Schmid
introduced the optimization problem \textsc{Tree in Binary Tree} (TBT),
where, given a tree $T$, the task is to find a binary tree $S^*$ with $V(T)=V(S^*)$ that minimizes $\sum_{uv\in E(T)} \operatorname{dist}_{S^*}(u,v)$. They also introduced the variant \textsc{Tree in Binary Tree With Steiner Nodes} (TBT-SN), where the condition $V(T)=V(S^*)$ is relaxed to $V(T)\subseteq V(S^*)$. 
We give a simple linear-time $\frac{8}{3}$-approximation algorithm for TBT and a polynomial-time approximation scheme (PTAS) for TBT-SN. 
Our main building idea in both algorithms is to iteratively consider subtrees of the tree $T$ which contain all vertices below some vertex $v$ and construct a solution where $v$ has degree 1.
For the PTAS, we cut $T$ into smaller trees, solve these instances, modify the obtained solutions, and combine them to achieve a solution for the original instance.\\

\end{abstract}

\textbf{Keywords}: network design, tree embedding, binary tree, approximation algorithms, polynomial-time approximation scheme

\thispagestyle{empty} \newpage{}

\setcounter{page}{1}

\setlength\parindent{0pt}

\section{Introduction}
Many algorithmic problems on graphs are significantly easier to solve when we restrict the input to trees. This simple observation motivates approximating a general graph metric, i.e. the distance function of a graph, by a tree metric. 
One highlight is Bartal's \cite{Bartal} idea to 
consider the expected distortion when approximating a metric by a tree metric.
Fakcharoenphol, Rao, and Talwar~\cite{FAKCHAROENPHOL2004485} 
showed that a metric can be approximated by tree metrics with expected distortion $(O(\log n))$. 
A natural next step is to restrict the properties of the trees behind the tree metric, especially introducing bounds on the degree. We focus on this step from general trees to trees of bounded degree, more specifically, binary trees.
Recently, Kellerhals, Krebs, Nichterlein, and Schmid~\cite{paper} introduced the optimization problem \textsc{Tree in Binary Tree} (TBT) in the context of network embedding problems, on which we focus in this paper.
$$
\begin{aligned}
&\textsc{Tree in Binary Tree (TBT)}\\[0.5em]
&\textbf{Input: } \text{A tree } T\\
&\textbf{Task: }
\text{Find a binary tree } S^*
\text{ with } V(T) = V(S^*) \text{ that minimizes} \sum_{uv \in E(T)} \operatorname{dist}_{S^*}(u,v).
\end{aligned}
$$
Kellerhals, Krebs, Nichterlein, and Schmid also introduced the variant
\textsc{Tree in Binary Tree with Steiner Nodes (TBT-SN)}, where the condition $V(T) = V(S^*)$ is replaced with  $V(T) \subseteq V(S^*)$. Intuitively, TBT-SN also allows Steiner nodes.
They
gave a linear-time 4-approximation algorithm for TBT, which first calculates a 3-approximate solution for TBT-SN and then eliminates the Steiner nodes.\\
In \cite{paper}, \textsc{Tree in Binary Tree} is motivated as an embedding problem. 
We highlight two related problems. In the first, trees are embedded into binary trees under a different objective function; in the second, trees are embedded into paths, but the objective function is the same as in TBT. Specifically: \\
First, in \cite{monien1985complexity}, Monien showed that, given an input tree $T$ of height 3 (diameter 6), finding a binary tree $S$ with $V(T)\subseteq V(S)$ such that $\max_{uv\in E(T)} \operatorname{dist}_S(u,v)$ is minimal is NP-hard. In embedding problems, this objective function is called \emph{bandwidth}. The computational complexity of TBT and TBT-SN is unknown; however, it seems highly unlikely that replacing the objective function bandwidth with the objective function in TBT makes the problem easier.\\
Second, given a tree $T$ as input, the task is to find a path $P$ with $V(T)\subseteq V(P)$ such that $\sum_{uv\in E(T)} \operatorname{dist}_P(u,v)$ is minimal. In \cite{GoldbergKlipker1976},
Goldberg and Klipker showed that an optimal solution can be found in polynomial time. The running time was later improved by Shiloach in \cite{Shiloach} and by Chung in \cite{CHUNG198443}.
In contrast to trees, given a bipartite graph $G$, computing a path $P$ with $V(P)\subseteq V(G)$ minimizing  $\sum_{uv\in E(G)} \operatorname{dist}_P(u,v)$ is NP-complete
\cite{diaz2002survey}.\\
TBT is a specialization of the well-studied \textsc{network design problem} \cite{mejri2023survey}. Given a weighted graph, the classical network design problem asks for a connected subgraph which optimizes a cost function related to communication cost under budget constraints, which can, for example, limit the number of edges or the total edge weight. A common objective function is the sum of all pairwise distances in the connected subgraph.
Relaxing the condition that a feasible solution must be a subgraph and further allowing a larger vertex set
leads to problems involving Steiner nodes, including the classical Steiner tree problem, which is NP-complete \cite{dreyfus, karp}. 
Liers, Martin, and Pape~\cite{LIERS201685} showed that computing a minimal binary Steiner tree is NP-complete.
Note that in TBT, the optimal binary tree is of course not a subgraph of the input tree $T$ if $T$ has a vertex of degree at least 4. However, one can consider a weighted complete graph $K$ with the same vertex set as $T$, where the weight on each edge $xy$ is the distance between the vertices $x$ and $y$ in $T$. 
A classical variant of the \textsc{network design problem} is the \textsc{Minimum Routing Cost Tree problem} (MRCT). Given a graph, the task is to find a subtree that minimizes the sum of all pairwise distances. Up to a scaling factor, the problem is also known as the minimum average distance (MAD) tree.
MRCT was introduced by Hu in \cite{hu} under the name \textsc{Optimum Distance Spanning Tree} problem. According to Hu, the problem was proposed to him by Francesco Maffioli. Hu also introduced a more general version for weighted graphs.
Johnson, Lenstra, and Rinnooy Kan~\cite{repec:ags:eureia:272157} showed that MRCT is NP-hard.
The authors in \cite{wu} gave a polynomial-time approximation scheme (PTAS), where the
input graph is allowed to have nonnegative weights.
Dahlhaus, Dankelmann, and Ravi~\cite{DAHLHAUS2004255} gave a linear-time algorithm on interval graphs.\\

The paper is structured as follows: Section 2 states our main results; Section 3 introduces the necessary notation. Sections 4 and 5 describe our algorithms for TBT and TBT-SN, respectively, with the corresponding analysis in Sections 6 and 7.

\section{Main Results}

Our results are the following two theorems.

\begin{theorem}
There exists a linear-time algorithm that, given a tree $T$ as an instance of TBT, determines a binary tree $S$ with $V(T) = V(S)$ such that 
$$ \sum_{uv \in E(T)} \operatorname{dist}_{S}(u,v) \le \frac{8}{3} \sum_{uv \in E(T)} \operatorname{dist}_{S^*}(u,v),$$
where $S^*$ is an optimal solution.
\end{theorem}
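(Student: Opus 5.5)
The plan is to combine a \emph{local lower bound} on $\OPT$ with a \emph{bottom-up construction} whose cost can be charged vertex by vertex against that bound.

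\textbf{Lower bound.} Fix an optimal $S^*$ and a vertex $v$ with $d=\deg_T(v)$. In a binary tree at most $3\cdot 2^{j-1}$ vertices lie at distance exactly $j$ from $v$. Hence
\[\sum_{u\in N_T(v)}\operatorname{dist}_{S^*}(u,v)\ \ge\ f(d),\]
where $f(d)$ is the sum of the $d$ smallest entries of the multiset containing $1$ three times, $2$ six times, $3$ twelve times, and so on. Each edge is counted from both endpoints, so $\OPT\ge \frac12\sum_v f(\deg_T v)$. I would also try a sharper, rooted variant. Root $T$ at a leaf and charge every edge $uv$, with $u$ a child of $v$, to $v$. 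If $v$ has $k$ children, the parent of $v$ also occupies one of the nearby slots of $v$ in $S^*$. This yields a lower bound of the form $\OPT\ge \sum_v g(k_v)+(\text{correction})$, where $g$ counts the cheapest $k$ slots left after one slot is reserved. I expect a convex combination of the unrooted and rooted bounds is what produces the constant $\frac83$.

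\textbf{Algorithm.} Root $T$ at a leaf $r$ and process vertices in post-order. For each $v$, build a binary tree $S_v$ on $V(T_v)$ in which $v$ has degree~$1$, so $v$ still has two free slots. Let $u_1,\dots,u_k$ be the children of $v$. Each $S_{u_i}$ has $u_i$ of degree $1$, so $u_i$ can accept two more neighbours. I would arrange $v,u_1,\dots,u_k$ in a heap-shaped binary tree $H$ rooted at $v$ with $v$ having one child in $H$. The catch is degree: a $u_i$ that is internal in $H$ would have its parent, its $S_{u_i}$-edge and two $H$-children, which is too many. To fix this, I would hang the subtree $S_{u_i}$ not at $u_i$ but at a leaf-position of $H$, or, symmetrically, let internal $u_i$ use its two free slots as $H$-children while the single $S_{u_i}$-neighbour of $u_i$ becomes its parent. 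Concretely, each $u_i$ is spliced into the edge between $u_i$ and its unique $S_{u_i}$-neighbour, or placed so that only leaves of $H$ carry subtrees. Only the edges $vu_i$ change length, and edges inside each $S_{u_i}$ keep their cost, since $S_{u_i}$ is kept as an induced piece. Therefore
\[\cost(S)=\sum_v h(k_v),\]
where $h(k)$ is the depth sum of $k$ nodes in the chosen shape. All of this is implementable in linear time, because each vertex is touched a constant number of times per child.

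\textbf{Comparison and main obstacle.} It remains to show $h(k)\le \frac83\cdot(\text{lower-bound share of }v)$ for every $k$, and to handle small $k$ and the root separately. The main obstacle is designing the shape $H$ so that this per-vertex ratio is at most $\frac83$ uniformly. Asymptotically, the depth sum of a heap with branching $2$ from a degree-$1$ root is close to $f(k)$ up to an additive $O(k)$ term. For small $k$, with $k\in\{2,3,4\}$, the ratio is tight and the $\frac83$ should appear there. I would first tabulate $h(k)$ against $f(k+1)/2+\ldots$ for small $k$, then choose the rooted/unrooted weighting that makes the worst case exactly $\frac83$, and finally verify the inequality for large $k$ by comparing level counts $2^{j}$ against $3\cdot 2^{j-1}$.
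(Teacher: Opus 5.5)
\textbf{The gap: you drop the reattachment cost.} Your claim that ``only the edges $vu_i$ change length'', and hence $\cost(S)=\sum_v h(k_v)$, is false.

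An internal vertex $u_i$ of the heap $H$ spends all three of its slots on $H$-edges: its $H$-parent and two $H$-children. So it cannot stay adjacent to $w_i$, its unique neighbour in $S_{u_i}$. The splicing alternative does not help, since it either gives $w_i$ degree $4$ or puts $S_{u_i}-u_i$ on the path between $v$ and $u_i$.

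What you can keep intact is $S_{u_i}-u_i$, not $S_{u_i}$. Once it is hung via $w_i$ at a leaf below $u_i$ in $H$, every $T$-edge $u_iw$ with $w\in N^+_T(u_i)$ is stretched by the distance from $u_i$ to that leaf. That can be as large as the height of $u_i$ in $H$. As a sanity check, if your identity held, comparing the heap with the lower bound below would already give a $\frac{5}{3}$-approximation.

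Controlling this stretch is the second half of the paper's proof, and it needs an idea missing from your plan. Before filling the heap, sort the children so that $d^+_T(u_1)\le\dots\le d^+_T(u_d)$; then the high positions of $H$ get the vertices with the fewest children. Combine this with $\sum_k k\,l_k(B)\le n-1$ for a complete binary tree $B$ on $n$ vertices (Lemma~\ref{lemma_subtree_adding_cost}). The stretch at $v$ is then at most $\sum_i d^+_T(u_i)$ (Lemma~\ref{lemma_adding_subtree_cost}). Summed over all $v$, it is at most $|E(T)|\le\OPT(T)$ (Corollary~\ref{corollary_costs_adding_subtrees}).

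\textbf{Where the constant comes from.} You are looking for $\frac{8}{3}$ in the wrong place. It is $\frac{5}{3}+1$, not a small-$k$ worst case of a weighted combination of lower bounds.

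The lower bound you need is simply $\OPT(T)\ge\sum_v \cost(G^v)(d^+_T(v))$. Charge each edge to its parent endpoint, and let the children of $v$ fill three complete binary trees hanging off $v$, ignoring the parent of $v$ entirely. Your reserved-slot variant is not valid as stated: the parent of $v$ may be far from $v$ in $S^*$ while the children use all three distance-$1$ slots. The natural correction, subtracting $\operatorname{dist}_{S^*}(v,p)$ for the parent $p$ of $v$, just reproduces the unrooted bound.

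Against this lower bound, the heap with $v$ of degree $1$ costs at most $\frac{5}{3}$ times as much, with equality at $j=3$ and $j=9$ (Lemma~\ref{lemma__cost_building_subtrees}). The stretch then adds at most one more $\OPT(T)$.

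\textbf{Running time.} Once you add the sorting step, linear time is no longer automatic. The paper recovers it by replacing the sort with a split of the children into those with above-average and at-most-average numbers of children (Appendix~\ref{appendix a}).
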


\begin{theorem}\label{thmptas}
For every $\epsilon > 0$, there exists some
$h \in \mathbb{N}$ with the following property:
Given a tree $T$ with $n$ vertices as an instance of TBT-SN, 
one can determine a binary tree $S$ with $V(T) \subseteq V(S)$ such that 
$$ \sum_{uv \in E(T)} \operatorname{dist}_{S}(u,v) \le (1+\epsilon) \sum_{uv \in E(T)} \operatorname{dist}_{S^*}(u,v),$$
where $S^*$ is an optimal solution, in time $O(n^h)$.
\end{theorem}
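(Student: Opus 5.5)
We build the PTAS for TBT-SN (Theorem~\ref{thmptas}) by decomposing $T$ into pieces of bounded size, solving each piece exactly by brute force, and gluing the piece solutions together. The gluing loses only a small additive amount per piece, which we charge to a lower bound on $\OPT$.

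\textbf{Step 1: lower bound.} Every edge $uv\in E(T)$ with $u\neq v$ has $\operatorname{dist}_{S^*}(u,v)\ge 1$, so $\OPT\ge n-1$. There is a second, stronger lower bound: a vertex $v$ of degree $d$ in $T$ must have its $d$ neighbours placed in a binary tree around $v$. At most $3\cdot 2^{k-1}$ vertices lie at distance $k$ from $v$, so the neighbours of $v$ contribute $\Omega(d\log d)$. Summing over all vertices gives
\[
\OPT\ \ge\ c\sum_v \deg_T(v)\log \deg_T(v) \qquad\text{and}\qquad \OPT\ \ge\ n-1 .
\]

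\textbf{Step 2: decomposition.} Root $T$ and fix $h'=h'(\epsilon)$. Traverse $T$ bottom-up, following the abstract's idea: whenever the subtree hanging below a vertex $v$, after removing pieces already cut off, reaches size at least $h'$, cut it off as a piece. This produces pieces $T_1,\dots,T_m$, each of size $\Theta(h')$ up to degree effects. Each piece $T_i$ is attached to the rest of the tree through a single edge above its root $r_i$.

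High-degree vertices would make pieces too large. To handle them, first replace each vertex $v$ of degree $d$ by a balanced binary ``gadget'' of Steiner nodes. This costs an extra $O(d)$ per vertex relative to the optimal local arrangement, which is about $\Theta(d\log d)$. The overhead is therefore not automatically negligible, so the gadget must be refined: children of $v$ are grouped in blocks of size $h'$, each block is solved exactly, and the blocks are attached through a binary tree. With blocks of size $h'$ the relative overhead becomes $O(1/\log h')$.

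\textbf{Step 3: local exact solutions and gluing.} For each piece with at most $h'$ terminals, compute an optimal TBT-SN solution in which the root $r_i$ has degree $1$. Such a solution has $O(h')$ Steiner nodes without loss of generality, since Steiner nodes of degree at most $2$ can be contracted, and it can be found by enumeration in time $n^{O(1)}\cdot f(h')$, which is absorbed into $O(n^h)$. The degree-$1$ requirement at $r_i$ costs at most an additive constant per piece compared to the unconstrained optimum, because a root of degree at most $3$ can be re-hung by shifting one subtree by one level. Glue the pieces along the cut edges using a binary tree of Steiner nodes where several pieces attach to the same parent. The pieces then contribute at most $\OPT$ in total, because restricting $S^*$ to a piece, by taking the Steiner closure of the piece's terminals, gives a feasible local solution of no larger cost. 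The gluing adds $O(1)$ per piece plus a logarithmic term at high-degree parents, which Step~2 handles.

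\textbf{Conclusion and main obstacle.} The total error is $O(m)+O(1/\log h')\cdot\OPT$. Since $m=O(n/h')=O(\OPT/h')$, choosing $h'$ large in terms of $\epsilon$ yields a $(1+\epsilon)$-approximation. The main obstacle is making the restriction argument rigorous: the restriction of $S^*$ to a piece must have cost bounded by that piece's share of $\OPT$, with the pieces' shares summing to at most $\OPT$ plus $O(m)$. This requires that the Steiner closures of different pieces do not overlap in a way that double-counts paths. I would first try a charging argument in which every path in $S^*$ between the endpoints of a $T$-edge is assigned to exactly the one piece containing that edge, so that the pieces' shares partition $\OPT$ and the loss comes only from the cut edges, at most one per piece.
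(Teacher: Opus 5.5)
Your architecture matches the paper's: bottom-up cutting into pieces of size at least a threshold, exact solutions with the piece root forced to degree $1$, gluing, and charging losses to piece sizes or to star lower bounds. The step you call the main obstacle is in fact immediate. In TBT-SN the optimal tree $S^*$ of $T$ is itself a feasible solution of every piece, since all other vertices act as Steiner nodes. The edge sets of the pieces together with the cut edges partition $E(T)$, so $\sum_P \OPT(P)+\sum_u \OPT(K_{1,\ell_u})\le \OPT(T)$, where $\ell_u$ is the number of cut-off children of $u$. The second sum is what pays for the cut edges. At a vertex with many cut-off children these cost about $\ell_u\log_2\ell_u$, not $O(1)$ each.

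The genuine gaps lie elsewhere. First, forcing the root $r_i$ of a piece to have degree $1$ does not cost only an additive constant. Suppose $r_i$ is the centre of a star with $d$ leaves. The unconstrained optimum uses all three directions at $r_i$, whereas a leaf has only one, and the loss is about $d\log_2 3$. In general every $T$-edge whose path passes through $r_i$ may lengthen by $2$, and there can be $\Theta(h')$ such edges. Losses of order $h'$ per piece sum to $\Theta(n)$ over $m\approx n/h'$ pieces, so your bound ``total error $O(m)$'' does not follow. The missing idea is the split in Lemma~\ref{lemmaA1}:
\begin{itemize}
\item edges of length at least $p=\lceil 2/\epsilon\rceil+1$ grow by at most a factor $1+\epsilon$;
\item edges of length less than $p$ through $r_i$ have both ends within distance $p$ of $r_i$, so there are at most $3(2^{p-1}-1)+3(2^{p-1}-1)^2$ of them.
\end{itemize}
That count depends only on $\epsilon$, so it is absorbed by taking the minimum piece size large.

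Second, your gluing step never says where a child piece is attached. The cut vertex $u$ may have degree $3$ in its piece's solution, and several cut vertices of one piece can compete for the same free slots. You need, for every cut vertex, a free slot of its own at distance small compared with the minimum piece size, plus a bound on what creating such slots costs. Your ``$O(1)$ per piece'' has no justification. This is a substantial part of the paper:
\begin{itemize}
\item Algorithm~\ref{algospace} subdivides the edge above each vertex whose depth is a multiple of $\lceil\log_2 c\rceil$ and which has no descendant of out-degree at most $1$ at distance less than $\lceil\log_2 c\rceil$.
\item Lemma~\ref{lemmaA3} shows by a charging argument (separating long edges, short edges and high-degree vertices) that this costs only a factor $1+4\epsilon$.
\item The marble argument of Lemma~\ref{marble} then gives an injective assignment $u\mapsto f(u)$ with $\operatorname{dist}(u,f(u))\le 2\lceil\log_2 c\rceil+1$.
\item The resulting cost $\ell\,(2\lceil\log_2 c\rceil+1)$ is charged to the $\ell$ attached pieces, each of cost at least $c-1$ (Lemma~\ref{lemmaAA}).
\end{itemize}

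Finally, your ``blocks'' refinement for high-degree vertices is unnecessary and too vague to check. Your cutting rule already ensures that every non-root vertex of a piece has fewer than $h'$ descendants in it, so only the root can have large degree inside its piece. For root degree $d>k$ the paper puts the children in a complete binary tree directly under the root. By Lemma~\ref{lemmaToOne} this is within $1+\epsilon$ of $\OPT(K_{1,d})$, and all other costs are $O(c^2 d)$, negligible against $\alpha d\log_2 d$ (Lemma~\ref{lemmaA2}). An $O(d)$ overhead is only an $O(1/\log d)$ fraction of $\OPT(K_{1,d})$, so it is harmless once $d$ exceeds a threshold depending on $\epsilon$.
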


While the PTAS for TBT-SN is our main result, it is based on the ideas and techniques presented in the $\frac{8}{3}$-approximation algorithm in Section~4. Furthermore, the analysis of the PTAS in Section~7 also uses several lemmas from Section~6. 

\section{Preliminaries} \label{section_notation}

We use standard graph theory terminology.
The \emph{order} $n(G)$ and the \emph{size} $m(G)$ of a graph $G$ are the numbers of its vertices and edges, respectively. 
A \emph{tree} is a connected, acyclic, and undirected graph. A \emph{rooted} tree is a tree together with one of its vertices designated as the root. Edges of a rooted tree are undirected; the root is only used to define child, parent, ancestor and descendant relations.
The \emph{depth} of a vertex in a rooted tree is its distance from the root.

Let $T$ be a tree with root $r$. 
We denote the vertex set and the edge set by $V(T)$ and $E(T)$, respectively. For a vertex $v \in V(T)$ we denote by $N_T(v)$ its neighborhood and by $d_T(v)$ its degree. Further, we denote by $N^+_T(v)$ the set of its children and define $d^+_T(v) \coloneqq \abs{N^+_T(v)}$.  If the underlying tree is $T$, we sometimes omit the subscript~$T$ in the degree and neighborhood.\\
If $v \notin V(T)$, we define $d_T(v)=0$. This convention will be used when considering Steiner vertices.\\

For any set of vertices $A$, let $T[A]$ denote the subgraph of $T$ induced by $A \cap V(T)$. 
For a vertex $v \in V(T)$, let $T_v$ denote the subtree of $T$ induced by $v$ and all descendants of $v$, that is, the subtree of $T$ rooted at $v$.

For two positive integers $a,b$, let $K_{a,b}$ denote the complete bipartite graph whose two partition classes have sizes $a$ and $b$.
In particular, $K_{1,a}$ is a star with $a+1$ vertices.\\

Given a tree $S$ and an instance $T$ of TBT or TBT-SN, we define 
$$\operatorname{cost}_T(S)=\operatorname{cost}(S)=\sum\limits_{uv\in E(T): u,v\in V(S)} \operatorname{dist}_S(u,v).$$
Unless stated otherwise, $\operatorname{cost}(S)$ refers to the instance $T$, and we omit the subscript $T$ for simplicity.\\
Let $\OPT(T)$ denote the optimal value of the instance $T$; the considered optimization problem (TBT or TBT-SN) is clear from context. 
A tree $S$ is called a (feasible) solution for an instance $T$ of TBT (or TBT-SN) if $S$ is a binary tree and $V(T)=V(S)$ (or $V(T)\subseteq V(S)$).
\\
For a positive integer $k$, let $[k]=\{1,\ldots,k\}$ denote the set of positive integers at most $k$.

\section{A $\mathbf{\frac{8}{3}}$-approximation algorithm for general instances of TBT}

Recall the definition of a complete binary tree, which is central for the construction in the algorithm.
\begin{definition} \label{definition_complete_binary_tree}
    We call a binary tree with $n$ vertices complete if it is isomorphic to the tree with vertices $u_1, \ldots, u_n$ with root $u_1$ in which the vertex $u_i$ has the parent $u_{\floor{i/2}}$ for $2 \leq i \leq n$.
\end{definition}
\noindent Our algorithm builds a binary tree $S$ using a bottom-up approach. The general idea is to build a solution $S^v$ of $T_v$ step by step where the root $v$ of $S^v$, which coincides with the root $v$ of $T_v$, has at most one child. The iteration works as follows.\\
Let $v \in V(T)$ be a vertex of the instance $T$ with children $u_1,\ldots, u_d$. Let $S^{u_1},\ldots, S^{u_d}$ be the solutions of the instances $T_{u_1},\ldots, T_{u_d}$ which the algorithm built before. By construction, $u_i$ has at most one child in $S^{u_i}$  which we call $w_i$, if it exists. We build a complete binary tree using the vertices $u_1,\ldots, u_d$ rooted in $u_1$ and connect $u_1$ to $v$. We call the resulting tree $S^v_+$. We add $S^{u_1}-u_1,\ldots, S^{u_d}-u_d$ to $S^v_+$ via edges between the $w_i$'s and leaves of $S^v_+$ such that the tree remains binary and $w_i$ is a descendant of $u_i$. We call this new tree $S^v$, which is a solution of $T_v$. Note that by construction $v$ has at most one child in $S^v$. Algorithm \ref{algorithm_trees_to_binary_trees} below gives the details.

\begin{algorithm} 
    \caption{Tree to Binary Tree
}
    
    \label{algorithm_trees_to_binary_trees}
    \renewcommand{\algorithmicrequire}{\textbf{Input:}}
    \renewcommand{\algorithmicensure}{\textbf{Output:}}
    \begin{algorithmic}[1]
    \Require tree $T$
    \Ensure binary tree $S$ with $V(S) = V(T)$     
        \While{there exists an unprocessed vertex $v \in V(T)$}
            \State \parbox[t][0pt]{435pt}{let $v \in V(T)$ be not yet processed with children $u_1, \ldots, u_{d}$ such that the subtrees\\ $S^{u_1},\ldots, S^{u_d}$ have already been build and $d^+_T(u_1) \leq \ldots \leq d^+_T(u_d)$} \label{algorithm_8over3_choosing_v_and_sorting_children}
            \State let $w_i$ be the child of $u_i$ in $S^{u_i}$, if it exists, for all $i=1,\ldots ,d$
            \State create the subtree $S^v_+$ as follows:
            \State \hspace{\algorithmicindent} delete the edges between $u_i$ and $w_i$ for all $i = 1, \ldots, d$ \label{algorithm_8over3_start_creating_Sv+}
            \State \hspace{\algorithmicindent} let $v$ be the root of $S^v_+$ with a single child $u_1$  \label{algorithm_8over3_connecting_u1_to_v}
            \State \hspace{\algorithmicindent} \parbox[t][0pt]{416pt}{starting from $u_1$ add the vertices $u_2, \ldots, u_d$ to create a complete binary tree, i.e.\\ the children of $u_i$ in $S^v_+$ are $u_{2i}$ and $u_{2i+1}$ if they exist}  \label{algorithm_8over3_end_creating_Sv+}
            \ForAll{$i = d, \ldots, 1$} \label{algorithm_8over3_loop_adding_subtrees}
                \State consider the subtree $(S^v_+)_{u_i}$ of $S^v_+$ which contains $u_i$ and all its descendants in $S^v_+$
                \State add an edge from $w_i$ to some leaf of $(S^v_+)_{u_i}$ which currently has a degree $<3$ \label{algorithm_8over3_connecting_subtrees}
            \EndFor
            \State define $S^v$ as the tree created by connecting the subtrees $S^{u_i}-u_i$ to $S^v_+$
        \EndWhile
    \end{algorithmic}
\end{algorithm}

\begin{remark}
   To improve readability, we omit explicit existence conditions in Algorithm \ref{algorithm_trees_to_binary_trees}. Note that $S^{u_i}$ can consist only of the vertex $u_i$, implying $w_i$ may not exist.
   Operations involving the vertex $w_i$ are only performed if $w_i$ exists. Similarly, $S^{u_i}-u_i$ is not reattached if it is empty.
We use the same convention in the following algorithms.
\end{remark}

\noindent We briefly discuss the correctness and runtime of Algorithm~\ref{algorithm_trees_to_binary_trees}.
\begin{fact} \label{fact_complete_binary_tree}
    Every complete binary tree $T$ with $n$ vertices has at least $\frac{n}{2}$ leaves. Additionally, for any vertex $v \in V(T)$ the subtree $T_v$ is also a complete binary tree.
\end{fact}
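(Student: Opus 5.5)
The plan is to argue directly from Definition~\ref{definition_complete_binary_tree}, using the indexing $u_1,\dots,u_n$ in which $u_i$ has parent $u_{\floor{i/2}}$. Then $u_i$ has children $u_{2i}$ and $u_{2i+1}$ exactly when $2i\le n$ and $2i+1\le n$, respectively. So the vertex $u_i$ is a leaf if and only if $2i>n$, that is, $i>\floor{n/2}$.

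\emph{Leaf count.} The leaves are $u_{\floor{n/2}+1},\dots,u_n$. There are $n-\floor{n/2}=\lceil n/2\rceil\ge \frac n2$ of them, which gives the first claim.

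\emph{Subtrees.} Fix $v=u_i$. Its descendants are exactly the $u_j$ whose index $j$ has the binary expansion of $i$ as a prefix. Equivalently, $j=2^k i + r$ with $k\ge0$, $0\le r<2^k$ and $j\le n$. I would relabel these vertices by $j\mapsto 2^k + r$, which replaces the prefix $i$ by the prefix $1$. This map is injective. It also sends parent to parent: $\floor{j/2}=2^{k-1}i+\floor{r/2}$ maps to $2^{k-1}+\floor{r/2}=\floor{(2^k+r)/2}$.

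The one point needing care, and the expected obstacle, is that the image is an initial segment $\{1,\dots,m\}$. Then $T_v$ is isomorphic to the complete binary tree on $m$ vertices.

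To show this, note that for fixed $k$ the indices $2^k i+r$ with $r<2^k$ form a contiguous block, and these blocks are ordered by $k$. The constraint $j\le n$ therefore cuts off a set that is downward closed in the order of the relabeled indices $2^k+r$. Concretely, suppose level $k$ is only partially present, with $r\le r_0$. Then all smaller levels are full, since $2^{k'}i+2^{k'}-1<2^k i\le n$ for $k'<k$. Also no level $k+1$ index is present, since $2^{k+1}i>2^k i+2^k-1\ge n$ would have to fail, and it cannot because $n<2^k i + 2^k$ in the partial case. So the relabeled index set is exactly $\{1,\dots,2^k+r_0\}$. Hence $T_v$ is complete, which is the second claim.
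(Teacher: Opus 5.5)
The paper states this as a Fact and gives no proof, so there is no argument of the authors to compare against. Your proof is correct and supplies the missing details.

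The leaf count is right: $u_i$ has no child iff $2i>n$, which gives $\lceil n/2\rceil \ge n/2$ childless vertices. The subtree argument is also right: the descendants of $u_i$ are exactly the $u_j$ with $j=2^k i+r$, $0\le r<2^k$, $j\le n$. The relabeling $2^k i+r\mapsto 2^k+r$ is injective. Your parent-compatibility check $\lfloor j/2\rfloor\mapsto\lfloor(2^k+r)/2\rfloor$ holds, and it uses that $2^k i$ is even for $k\ge1$.

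One small cleanup concerns the last paragraph. As written, the ``concretely'' argument only covers the case where some level is partially present, and the sentence about level $k+1$ is garbled. It is cleaner to let $k$ be the largest integer with $2^k i\le n$. Then:
\begin{itemize}
\item no index of level $k+1$ or higher is present, by definition of $k$;
\item every level $k'<k$ is full, by your inequality $2^{k'}i+2^{k'}-1<2^k i\le n$;
\item level $k$ contains exactly the $r\le r_0:=\min(2^k-1,\,n-2^k i)$.
\end{itemize}
So the image is $\{1,\dots,2^k+r_0\}$ whether or not level $k$ is full.

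Alternatively, your downward-closedness remark already suffices on its own if you state two facts explicitly. First, both maps $(k,r)\mapsto 2^k i+r$ and $(k,r)\mapsto 2^k+r$ are strictly increasing in the lexicographic order on pairs with $0\le r<2^k$; the former because $2^k i+2^k-1<2^{k+1}i$ for $i\ge 1$. Second, the latter map is onto the positive integers. Then the image of the initial segment $\{j\le n\}$ is an initial segment $\{1,\dots,m\}$.
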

\begin{remark}
    Consider line \ref{algorithm_8over3_connecting_subtrees} of Algorithm~\ref{algorithm_trees_to_binary_trees} and fix the iteration in which the subtree $S^{u_i} - u_i$ is added to $S^v_+$ for some $i = 1, \ldots, d$. According to Fact \ref{fact_complete_binary_tree}, there always exists a leaf in $(S^v_+)_{u_i}$ which currently has a degree less than $3$ in $S_v(u_i)$.\\
    For the same reason, we iterate through the leaves in line \ref{algorithm_8over3_loop_adding_subtrees} 
    in reverse order to ensure that we do not accidentally
    ``overbook''
    a leaf. If we iterated through $u_1, \ldots, u_d$ in an arbitrary order, we might assign $u_1$ and $u_2$ both to some leaf $u_i$ with $i \geq 3$ and would not be able to connect $u_i$ with the subtree $S^{u_i}$ as intended in line \ref{algorithm_8over3_connecting_subtrees}.
\end{remark}
\begin{remark}\label{remark_runtime}
    Every vertex $u \in V(T)$ is considered at most once when creating the tree $S^u_+$ and $S^{v}_+$ for its parent $v$ in lines \ref{algorithm_8over3_start_creating_Sv+}-\ref{algorithm_8over3_end_creating_Sv+} and once when connecting $S^{u}$ to $S^v_+$ in line \ref{algorithm_8over3_connecting_subtrees}. Additionally, we only require the children $u_1, \ldots, u_d$ of the currently considered vertex $v$ to be sorted in line \ref{algorithm_8over3_choosing_v_and_sorting_children}. Hence, Algorithm~\ref{algorithm_trees_to_binary_trees} has a runtime in $O(n \log(n))$.
\end{remark}
\begin{remark} \label{remark_linear_runtime}
    We can adapt Algorithm~\ref{algorithm_trees_to_binary_trees} to get a linear runtime. The general idea is that we divide the vertices $u_1, \ldots, u_d$ into two sets: one containing the vertices with more children than average, the other containing those with at most the average number of children. However, we retain our current presentation of Algorithm~\ref{algorithm_trees_to_binary_trees} to better convey the intuition behind it and show that the linear runtime adaptation maintains the same approximation in Appendix \ref{appendix a}. 
\end{remark}

\newcommand{\C}{c}

\algrenewcommand\algorithmicrequire{\textbf{Input:}}
\algrenewcommand\algorithmicensure{\textbf{Output:}}

\section{A PTAS for TBT-SN}

Let $T$ be an instance of TBT-SN, where $T$ has root $r$, and let $\epsilon>0$. We describe our algorithm, which shows Theorem \ref{thmptas}.
Our algorithm consists of the following steps: Cutting the tree $T$ into large subtrees, solving those smaller instances, modifying the solutions of those smaller instances, and combining them to receive a solution for the original instance. Each of the following four subsections deals with one of those steps.

The algorithm uses the parameters $\C$ and $k$,
which only depend on the choice of $\epsilon$. How those parameters are chosen is explained in Section 7.

\subsection{Step 1: Cutting $T$ into pieces of size at least $\C$}

Let $T'$ be a copy of $T$. In the first step of the algorithm, we cut the tree $T'$ into large pieces, resulting in a forest. Iteratively, choose $C$ as the component containing the root $r$, choose a vertex $v$ of biggest depth such that $n(C_v)>\C$ and remove the edge between $v$ and its parent in $T'$. 
The algorithm below gives the details.

\begin{algorithm}[H]
    \caption{Cutting}
    \label{euclid}
    \begin{algorithmic}[1]
    \Require tree $T$
    \State let $T'$ be a copy of $T$ with root $r$
    \State $C:=T'$
        \While{$n(C)>\C$}
\State let $v$ be a vertex of $C$ of greatest depth such that 
$n(C_v)> \C$
\If{$v=r$}
\State \textbf{terminate}
\EndIf
\State remove the edge between $v$ and its parent in $T'$, root the cut-off component in $v$
\State update $C$ to be the component of $T'$ containing $r$
\EndWhile
    \end{algorithmic}
\end{algorithm}

In the following, let $R$ be the component of $T'$ which contains $r$.
Note that every component except 
$R$ has size greater than $\C$.

\subsection{Step 2: Solving the parts}

For every component $C$ with root $v$ of $T'$, we (approximately) solve the instance $C$.
The idea is the following: If $d_C(v)\le k$, then $C$ has at most $\C k+1$ many vertices and we compute a solution of $C$ with brute force. If $d_C(v)>k$, then for an optimal solution $S^*$ of $C$, $\sum_{u\in N_C(v)} \operatorname{dist}_{S^*}(u,v)$,
the cost of the edges adjacent to $v$ in $C$,
dominates the combined cost of all other edges, that is $\sum_{xy\in E(C-v)} \operatorname{dist}_{S^*}(x,y)$. In this case, we compute a solution $S^C$ of $C$ 
such that $\sum_{u\in N_C(v)} \operatorname{dist}_{S^C}(u,v)$ is close to the lower bound $\OPT(K_{1,d_C(v)})$ of $\OPT(C)$ and $\operatorname{cost}(S^C-v)$ is negligible compared to $\OPT(K_{1,d_C(v)})$.

If $C\not=R$, then in both cases we have the further restriction that we want a solution $S^C$ of $C$ where the root $v$ has degree 1.\\
The algorithm below gives the details. For every component $C$ of $T'$, it computes a solution $S^C$.

\begin{algorithm}[H]
    \caption{Solving the components}
    \label{euclid}
    \begin{algorithmic}[1]
    \ForAll{components $C$ of $T'$}
    \State let $v$ be the root of $C$
    \If{$d_C(v)\le k$}
        \If{$C\not=R$}
        \State using brute force, we compute an optimal solution $S^C$ of the instance $C$ under the  
        \Statex \hspace{\algorithmicindent}\hspace{\algorithmicindent}\hspace{\algorithmicindent}additional condition that $v$ has degree 1 in $S^C$, i.e. we compute a binary tree $S^C$ 
\Statex\hspace{\algorithmicindent}\hspace{\algorithmicindent}\hspace{\algorithmicindent}with $V(C)\subseteq V(S^C)$ and $d_{S^C}(v)=1$ that minimizes $\sum_{uv \in E(C)} \operatorname{dist}_{S^C}(u,v)$
        \Else
        \State using brute force, we compute an optimal solution $S^R$ of the instance $C=R$
        \EndIf
    \Else
\State let $u_1,\ldots, u_\ell$ be the neighbors of $v$ in $C$
\State for $i\in [\ell]$, let $C_i$ be the component of $C-v$ that contains $u_i$; we root $C_i$ in $u_i$
\State for $i\in [\ell]$, let $S^{C_i}$ be any binary tree with $V(S^{C_i})=V(C_i)$ and $d_{S^{C_i}}(u_i)\le 1$
\State for $i\in [\ell]$, let $w_i$ be the neighbor of $u_i$ in $S^{C_i}$ (if it exists)
\State let $S^C$ be the binary tree with $V(S^C)=\{v,u_1,\ldots, u_\ell  \}$ and root $v$, such that $u_1$ is the 
\Statex\hspace{\algorithmicindent}\hspace{\algorithmicindent}only child of $v$ (if it exists) and $u_{2i}, u_{2i+1}$ are the children of $u_i$ (if they exist)

\State for $i\in [\ell]$, add the components $S^{C_i}-u_i$ to $S^C$
\State
add edges between the $w_i$'s and the $u_i$'s in $S_C$
such that $u_i$ is an ancestor of $w_i$
and
\Statex\hspace{\algorithmicindent}\hspace{\algorithmicindent}the resulting tree is binary
    \EndIf
    \EndFor
    \end{algorithmic}
\end{algorithm}

\begin{remark}
Note that given an instance $C$, there always exists an optimal solution where every Steiner vertex has degree 3. This implies the existence of an optimal solution where 
less than half of the vertices are Steiner vertices and allows us to find optimal solutions of instances of bounded size with brute force. 
In the following, less than half of the vertices of $S^C$ are Steiner vertices, implying $n(S^C)\le 2 n(C)-1$.

Regarding line 11, in the analysis we only use the bound $\operatorname{cost}(S^{C_i})\le c^2$, hence any solution of $C_i$ with $V(S^{C_i})=V(C_i)$ and $d_{S^{C_i}}(u_i)\le 1$ suffices.\\
Regarding line 15, we already explained in Algorithm~\ref{algorithm_trees_to_binary_trees} line 7 to line 9 how we can add those edges between the $w_i$'s
and the $u_i$'s.
\end{remark}

\subsection{Step 3: Create close attachment points}

We introduce some terminology.
Given a rooted binary tree, we say a vertex of out-degree 0, 1 or 2 has 2, 1 or 0 attachment points, respectively. That is, the number of attachment points of a vertex describes how many edges we can still connect to it. 
We say that an attachment point of a vertex $u$ is below a vertex $v$ if $u$ is a descendant of $v$. The distance between a vertex $v$ and an attachment point of a vertex $u$ is the distance between $v$ and $u$.\\

In the last step, for every component $C$ of $T'$, we found a near-optimal solution $S^C$ where the root has out-degree at most 1 if $C\not=R$.
In this section, we transform each solution $S^C$ of $C$ into a solution $S'^C$ such that for every vertex of $S'^C$, we find a close unique attachment point, which we can then use in the next step to connect the solutions together.

The idea is the following:
Given a solution $S^C$, we consider every vertex $v$ (other than the root) whose (original) distance to the root is a multiple of $\lceil \log_2(\C) \rceil$. If $v$ has no descendant within distance smaller than $\lceil \log_2(\C) \rceil$ of out-degree at most 1, we subdivide the edge between $v$ and its parent by adding a Steiner vertex, which creates a close attachment point for some near ancestors of $v$. Otherwise near ancestors of $v$ already have a close attachment point and we skip $v$.\\
The algorithm below gives the details.
Note that we apply the algorithm to a copy of $S^C$ for every component $C$ of $T'$.


\begin{algorithm}[H]\caption{Create close attachment points}\label{algospace}
    \begin{algorithmic}[1]
    \Require binary tree $S'$ with root $a$
    \State mark every vertex in $S'$ whose distance to $a$ is a multiple of $\lceil \log_2(\C) \rceil$ except $a$ itself
\ForAll{marked vertices $v$}
\If{in $S'_v$ there exists a vertex with out-degree at most 1 with distance less than $\lceil \log_2(\C)\rceil$ to $v$}
    \State unmark $v$
\Else 
\State subdivide the edge between $v$ and its parent $w$, that is,
\Statex \hspace{\algorithmicindent}\hspace{\algorithmicindent}add a new Steiner vertex $z$,
remove the edge $vw$, and add the edges $vz$ and $zw$
%
\State unmark $v$
\EndIf
\EndFor
    \end{algorithmic}
\end{algorithm}

The resulting tree when applying Algorithm~\ref{algospace} to a copy of $S^C$ is called $S'^C$ in the following.\\

Note that Algorithm~\ref{algospace} is designed in such a way that for every vertex $u$ of $S'^C$ there exists a descendant with out-degree at most 1 with distance at most $2\lceil \log_2 \C \rceil$ to $u$.
The following lemma shows that, for vertex $v$ of $S'^C$, there exists a unique attachment point with distance at most $2\lceil \log_2 \C \rceil+1$ to $v$, which is a descendant of $v$. Furthermore, the proof of the lemma shows how a suitable mapping between the vertices of $S'^C$ and its attachment points can be found.

Note that for every component $C$ of $T'$
with at least two vertices, 
the attachment point of the root of $S'^C$ is not used. 
Remember that every component, which does not contain the root of $T'$,
has at least $\C$ vertices.
If the component $R$ of $T'$, which contains the root $r$ of $T$, consists only of the vertex $r$, then we use one attachment point of $r$ for $r$.

\begin{lemma}\label{marble}
Let $F$ be a rooted binary tree 
with at least two vertices.
Let $t\in \mathbb{N}$ such that for every vertex $v$ of $F$ there exists an attachment point, which is a descendant of $v$, with distance at most $t$.
For every vertex $v$ of $F$, there exists a unique attachment point with distance at most $t+1$, which is a descendant of $v$. Furthermore, the attachment point of the root is not used.
\end{lemma}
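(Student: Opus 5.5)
We give an explicit injective assignment, built bottom-up, from the vertices of $F$ to the attachment points of $F$. A vertex of out-degree $0$, $1$ or $2$ carries $2$, $1$ or $0$ attachment points, and $F$ has $n(F)-1$ edges. Hence a subtree $F_v$ carries exactly
\[
\sum_{u\in V(F_v)} \bigl(2-d^+(u)\bigr) = 2n(F_v)-(n(F_v)-1) = n(F_v)+1
\]
attachment points. So there is always exactly one point of slack. The plan is to process the vertices in order of non-increasing depth. Each vertex $v$ receives one still-unused attachment point lying in $F_v$.

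\textbf{Invariant.} After all vertices of $F_v$ are processed, exactly one attachment point of $F_v$ is unused. Call it the free point $p(v)$. This follows by induction from the count above. At $v$ the available points are $p(c)$ for each child $c$ of $v$, together with the $2-d^+(v)$ points of $v$ itself. In every case there are exactly two available points. Vertex $v$ takes one and passes the other up as $p(v)$.

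\textbf{Choice rule.} The rule is meant to keep all distances at most $t+1$: $v$ takes whichever of its two candidates is closer to $v$. To make the distance analysis work, I would strengthen the induction by tracking how far the free point $p(v)$ can be from $v$. Concretely, I would prove simultaneously that the assigned point of $v$ is within $t+1$ of $v$, and that $p(v)$ is at distance at most $t+1$ from $v$ or at least no farther than needed to serve ancestors. The hypothesis supplies, for every $v$, some attachment point at a descendant $u$ with $\operatorname{dist}(v,u)\le t$.

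\textbf{Distance argument.} We must show that this near point, or a replacement within one extra step, is still free when $v$ is processed. I would argue as follows. If a point at $u$ was consumed by a vertex $x$ on the path from $v$ down to $u$, then $x$ was itself within distance $\le t$ of $u$. The slack of one extra unit of distance, together with the two-candidate structure at $v$, should let us exchange assignments along that path so that $v$ still gets a point within $t+1$.

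\textbf{Main obstacle and fallback.} This exchange step is where I expect the main difficulty. If the direct greedy exchange becomes messy, I would fall back on Hall's theorem for the bipartite graph joining each vertex $v$ to the attachment points in $F_v$ within distance $t+1$. For a vertex set $X$, I would take its topmost elements $v_1,\dots,v_m$, whose subtrees are pairwise disjoint. I would then bound $|X\cap F_{v_j}|$ by the number of points within distance $t+1$ below elements of $X\cap F_{v_j}$. This again uses the count of $n(F_{v_j})+1$ points, restricted to the relevant part of the subtree, together with the per-vertex hypothesis.

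\textbf{Root.} The final claim is that the attachment point of the root is not used. Apply the argument to the vertices so that the single unused point at the top is the root's own point. If the root has out-degree $1$, its own point is one of its two candidates, and the rule can be modified so that the root always takes the candidate coming from its child. That candidate is within distance $t+1$ by the same argument. Then the root's own point remains unused.
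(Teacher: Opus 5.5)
\textbf{There is a genuine gap.} Your counting skeleton is correct: each vertex $v$ sees exactly two candidates (the free points $p(c)$ of its children plus its own $2-d^+(v)$ points), keeps one and passes one up. But the substance of the lemma is the choice rule and the distance bound, and the rule you propose, keeping the \emph{closer} candidate, fails.

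Take the path $r,x_1,x_2,x_3$ rooted at $r$. Every vertex has an attachment point of its own, so the hypothesis holds with $t=1$. Under your rule, $x_3$, $x_2$ and $x_1$ each keep an own point, so $p(x_1)$ sits at $x_3$. The root must then either keep its own point, violating the last claim, or take $p(x_1)$ at distance $3>t+1$. Even the distance bound alone fails. Let the root have out-degree $2$, and let each child head a path of $L$ vertices down to a leaf. The hypothesis holds with $t=1$, yet your rule hands the root a point at distance $L$.

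The reason is that keeping the closer candidate passes the \emph{farther} one up, which is exactly wrong for ancestors. Your exchange step is not supplied, and your root modification depends on it. The Hall fallback only restates the problem: the count $n(F_{v_j})+1$ is over all attachment points of $F_{v_j}$, not those within distance $t+1$ of $X\cap F_{v_j}$. Verifying Hall's condition is precisely what remains to be proved.

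\textbf{The fix is to reverse the rule.} Let $v$ keep the farther candidate and pass the closer one up. Let $h(v)\le t$ be the distance from $v$ to a nearest attachment point in $F_v$. By induction, $\operatorname{dist}(v,p(v))=h(v)$. Indeed, the candidates are own points at distance $0$ and the points $p(c)$ at distance $1+h(c)$, and their minimum is exactly $h(v)$. The kept point is then at distance $0$ for a leaf, $1+h(c)\le t+1$ for out-degree $1$, and $1+\max(h(c_1),h(c_2))\le t+1$ for out-degree $2$.

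A vertex of out-degree $1$ always passes its own point up. So if the root has out-degree $1$, its own point is the one left unused, and the root clause needs no special treatment. If the root has out-degree $2$, it has no point and the clause is vacuous.

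This corrected argument is essentially the paper's marble proof read bottom-up. There, a marble arriving from above is always routed toward a nearest attachment point below the child it first enters, which corresponds to your passed-up point within distance $t$. The vertex's own marble settles for the other option, at distance at most $t+1$.
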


\begin{proof}
We prove this result algorithmically. Imagine the following process: Place a marble at every vertex of $F$. 
Once a marble is moved to an attachment point, it no longer exists. Our goal is to find a unique attachment point for every marble which has distance at most $t+1$ to its original position. \\
%
After moving the marble from the root to an arbitrary child, 
we iterate the following process.
Consider the vertex $v$ of smallest depth which contains marbles. If there are two marbles at $v$, we place the marble which wasn't originally there, in an attachment point of $v$ if possible or move it to the child where the next attachment point below is closer (break ties arbitrarily). The second marble we either place at an attachment point of $v$, or move it to the child where the next attachment point below is farther away.
If there is only one marble at the vertex $v$, we place the marble at an attachment point of $v$ if possible or move the marble to the child where the next attachment point below is closer.\\

Note, first, that at any moment at most two marbles are at the same vertex.
If two marbles are at a vertex $v$, then, at that moment, all ancestors of $v$ contain no marbles and every descendant of $v$ contains exactly one marble.
Second, a simple induction shows that a rooted tree with $n$ vertices, where all vertices have at most two children, has exactly $n+1$ attachment points. (A rooted tree containing one vertex has two attachment points. In the induction step, adding another vertex removes one attachment point but creates two new ones.)
Hence the binary tree contains enough attachment points and every marble ends up at a unique attachment point.
Third, after a marble moved the first time from its original vertex to a child $u$, it then moves step by step closer to one of the closest attachment points below $u$,
which implies that the marble finds an attachment point with distance at most $1+t$.
\end{proof}

\begin{remark}
    Algorithm~\ref{algospace} can be modified such that no new Steiner vertices are added, that is, we just shift edges.
    Therefore, choose $v$ in Line 2 as the marked vertex of greatest depth and replace Line 6 with
    \begin{algorithmic}[1]
    \State let $u$ be a closest vertex to $v$ of out-degree at most 1 in $S'_v$ 
    \State let $v=v_1v_2\ldots v_\ell=u$ be the path between $v$ and $u$
    \State let $w_{i+1}$ be the second child besides $v_{i+1}$ of $v_i$ for every $i=1,\ldots, \ell-1$.
    \ForAll{$i=1,\ldots, \ell-1$}
    \State remove the edge $v_iw_{i+1}$ and add the edge $v_{i+1}w_{i+1}$
    \EndFor
        \end{algorithmic}
Those changes of course also change the analysis in Section 7. However, after modifying the choice of $\C$ accordingly, we still receive a guaranteed bound which is good enough.
\end{remark}

\subsection{Step 4: Combining the parts}

For every component $C$ of $T'$, we computed a solution $S'^C$ where the root has degree at most 1 if $C\not=R$. For every component $C$ of $T'$ and every vertex $v$ of $S'^C$, the proof of Lemma \ref{marble} shows how to assign a unique attachment point with distance at most $2\lceil \log_2 \C \rceil +1$ to $v$. In this section, we denote the vertex which hosts the attachment point of $v$ by $f(v)$. Again note that, if the component $R$ of $T'$, which contains the root $r$ of $T$, consists only of the vertex $r$, then $f(r)=r$. In all other cases $f(v)$ is not a root.

Our idea is the following. Consider a vertex $u$ of a component $C$ of $T'$, where we removed adjacent edges when we cut the tree $T$ into large components. The set $N^+_T(u) \setminus N^+_{T'}(u)$ contains those cut-off neighbors, which we call $u_1,\ldots, u_\ell$. Let $C_i$ be the component of $T'$ which contains $u_i$. Note that $u_i$ is the root of $C_i$. Furthermore, $u_i$ has degree 1 in $S'^{C_i}$. Let $w_i$ be the child of $u_i$ in $S'^{C_i}$.
We attach the components $S'^{C_1},\ldots, S'^{C_\ell}$ at $f(u)$ using our standard attach-procedure, that is, first remove the $u_i$'s, then build a complete binary tree with the $u_i$'s, add an edge between the root of that binary tree and $f(u)$, and add an edge between $w_i$ and a suitable  descendant of $u_i$ for every $i\in [\ell]$.
The following algorithm formalizes this idea in a bottom-up approach (see Figure \ref{figuret} for illustration). 

\begin{algorithm}[H]
    \caption{Combining the parts}
    \label{euclid}
    \begin{algorithmic}[1]
    \Ensure a solution of the instance $T$    
    \While{not all components of $T'$ are already processed}
    \State 
    choose a previously unchosen component $C$ of $T'$ whose root has maximum depth
    \State let $S$ be a copy of $S'^C$
    \ForAll{$u\in V(C)$}
    \State let $u_1,\ldots, u_\ell$ be the vertices in $N^+_T(u)\setminus N^+_{T'}(u)$
    \State for $i\in [\ell]$, let $C_i$ be the component of $T'$ which contains $u_i$
    \State for $i\in [\ell]$, let $d_i=|N_T(u_i) \cap V(S^{u_i})|$, and w.l.o.g. let $d_1\le d_2 \le \ldots \le d_\ell$

    \State for $i\in [\ell]$, let $w_i$ be the child of $u_i$ in $S^{u_i}$

    \State for $i\in [\ell]$, add $S^{u_i}-u_i$ to $S$, add the vertices $u_1,\ldots, u_\ell$, add an edge between $u_1$ and
    \Statex \hspace{\algorithmicindent}\hspace{\algorithmicindent}$f(u)$ and add edges such that the children of $u_i$ are $u_{2i}$ and $u_{2i+1}$ should they exist
    \State 
add $\ell$ edges between the $w_i$'s and the $u_i's$, which have out-degree at most 1, in such a 
\Statex \hspace{\algorithmicindent}\hspace{\algorithmicindent}way that $u_i$ is an ancestor of $w_i$ and the tree is still binary


\EndFor

\State $S^a:=S$ (if $a$ is the root of $S'^C$)

\EndWhile

\State \textbf{return} $S^r$
    \end{algorithmic}
\end{algorithm}

By construction, for every component $C$ with root $a$, $S^a$ is a solution of $T_a$. In particular, $S^r$ is a solution for $T_r=T$.

All algorithms in Section 5 have polynomial running time. The analysis of the approximation factor follows in Section 7.

\begin{figure}[H]
    \centering
    \begin{tikzpicture} [scale=0.235] 
    \tikzstyle{point}=[draw,circle,inner sep=0.cm, minimum size=0.9mm, fill=black] 
    \tikzstyle{point2}=[draw,circle,inner sep=0.cm, minimum size=0.5mm, fill=black]
    \tikzstyle{point3}=[draw,circle,inner sep=0.cm, minimum size=1.5mm]
    \tikzstyle{line1}=[line width=0.4mm]

    \begin{scope}[shift={(15,12)}]  
    \draw [rounded corners, rounded corners=5mm] (0,0)--(12,0)--(6,12)--cycle;    
    \begin{scope}[shift={(-18,-12)}]
    \node at (29,17) [label= right:] {$S'^C$};
    
    \node[point3] (fu) at (22,14) [label= right:\text{$f(u)$}] {};
   
    \node[point3] (fm) at (24.5,20) [label=above :\text{$u$}] {};

    \draw[decorate, decoration={snake, amplitude=1mm, segment length=6mm}]
    (fu) -- (fm);
    \end{scope}

    \end{scope}

  \begin{scope}
    \node[point] (u) at (16,9) [label=left:] {};
    
    \draw (fu)--(u);

\begin{scope}[shift={(16,9)}] 
    \node[point] (u1)   at (0,0)  [label=left:$u_1$] {};
\node[point] (u2)  at (-4,-2) [label=left:$u_2$] {};
\node[point] (u3)  at (4,-2)  [label=right:$u_3$] {};

\node[point] (u4)  at (-6,-4) [label=left:$u_4$] {};
\node[point] (u5)  at (-2,-4) [label=right:$u_5$] {};
\node[point] (u6)  at (2,-4)  [label=right:$u_6$] {};
\node[point] (u7)  at (6,-4)  [label=right:$u_7$] {};

\node[point] (u8)  at (-7,-6) [label=left:$u_8$] {};
\node[point] (u9)  at (-5,-6) [label=right:$u_9$] {};

\end{scope}

\draw (u2)--(u1)--(u3);
\draw (u4)--(u2) --(u5);
\draw (u6)--(u3)--(u7);
\draw (u8)--(u4)--(u9);
\end{scope}

\foreach \i/\t/\x in {
    1/u8/0,
    8/u8/1,
    4/u9/2,
    9/u9/3,
    2/u5/4,
    5/u5/5,
    3/u6/6,
    6/u6/7,
    7/u7/8
} {

\begin{scope}[shift={({6*\x-5},-7)}]

    \node[point] (v\i) at (2.5,3.5) [label=below:$w_{\i}$] {};
\node at (2.5,-2)
[label=below:] {{ $S^{u_\i}\!-\!u_\i$}};
    \draw[rounded corners=2.5mm]
        (-0.56,-1)--(2.5,4.3)--(5.56,-1)--cycle;

    \draw (v\i) -- (\t);

\end{scope}
}

\begin{scope}[shift={(21,28)}]    
\draw[->] (0,0) -- (0,-3);
\end{scope}

\begin{scope}[shift={(0,51)}]

\foreach \i in {1,2,3} {
\begin{scope}[shift={({6*\i-5},-20)}]

    \node[point] (v\i) at (2.5,3.5) [label=below:$w_{\i}$] {};
    \node[point] (g\i) at (2.5,5.5) [label=above:$u_{\i}$] {};
    \draw (v\i)--(g\i);
\node at (2.5,-2)
[label=below:] {{$S^{u_\i}$}};
    \draw[rounded corners=2.5mm]
        (-0.56,-1)--(2.5,4.3)--(5.56,-1)--cycle;
\end{scope}
}

    \node[point2] at (20,-20) [label=below:] {};
    \node[point2] at (21,-20) [label=below:] {};
    \node[point2] at (22,-20) [label=below:] {};

\foreach \i in {9} {
\begin{scope}[shift={(24,-20)}]

    \node[point] (v\i) at (2.5,3.5) [label=below:$w_{\i}$] {};
    \node[point] (g\i) at (2.5,5.5) [label=above:$u_{\i}$] {};
    \draw (v\i)--(g\i);
\node at (2.5,-2)
[label=below:] {{$S^{u_\i}$}};
    \draw[rounded corners=2.5mm]
        (-0.56,-1)--(2.5,4.3)--(5.56,-1)--cycle;
\end{scope}
}

 \begin{scope}[shift={(30,-21)}]  
    \draw [rounded corners, rounded corners=5mm] (0,0)--(12,0)--(6,12)--cycle;    
    \node at (6,-1) [label= right:] {$S'^C$};
     \begin{scope}[shift={(-18,-12)}]
    \node[point3] (fu) at (22,14) [label= right:\text{$f(u)$}] {};
   
    \node[point3] (fm) at (24.5,20) [label=above :\text{$u$}] {};

    \draw[decorate, decoration={snake, amplitude=1mm, segment length=6mm}]
    (fu) -- (fm);
    \end{scope}
    \end{scope}

\end{scope}

\end{tikzpicture}

\caption{Illustration for $\ell=9$, that is, $\{u_1,\ldots, u_\ell\}=N^+_T(u)\setminus N^+_{T'}(u)$. $S^{u_1},\ldots, S^{u_9}$ get attached at attachment point $f(u)$ of $S'^C$. }\label{figuret}

\end{figure}

\section{Analysis of the $\frac{8}{3}$-approximation algorithm for general instances}

In the first part of this section, we prove the claimed approximation ratio of $\frac{8}{3}$ of Algorithm~\ref{algorithm_trees_to_binary_trees} for general instances $T$. In the second part, we show that the cost of the tree $S$ created by Algorithm \ref{algorithm_trees_to_binary_trees} becomes arbitrarily close to the optimal solution if every vertex in $T$ is either a leaf or has a sufficiently large degree. 

\subsection{Worst Case Analysis for general trees} \label{section_worst_case_analysis}

Let $T$ be an arbitrary tree and $S$ be the tree created by Algorithm~\ref{algorithm_trees_to_binary_trees} with input $T$. Consider an arbitrary vertex $v \in V(T)$ with children $u_1, \ldots, u_d$. We split our analysis as well as the overall distance between $v$ and its children $u_1, \ldots, u_d$ in $S$ into two parts. To this end, consider the iteration in which Algorithm~\ref{algorithm_trees_to_binary_trees} builds the tree $S^v$. Firstly, we consider the overall distance between $v$ and all its children $u_1, \ldots, u_d$ in $S^v_+$ which is
\[
\operatorname{cost}(S^v_+) = \sum_{i = 1}^{d} \operatorname{dist}_{S^v_+}(u_i, v).
\] 
Secondly, we consider the additional distance between the vertices $u_1, \ldots, u_d$ and their respective children, which may be caused by deleting the direct edge between $u_i$ and $w_i$ in line \ref{algorithm_8over3_start_creating_Sv+} and reattaching $S^{u_i}-u_i$ to some leaf in $S^{v}_+(u_i)$ in line \ref{algorithm_8over3_connecting_subtrees} of Algorithm~\ref{algorithm_trees_to_binary_trees}. These additional distances are then caused if the leaf to which $S^{u_i}- u_i$ is attached is not $u_i$ but some other child of $v$.\\
We lower bound the overall distance of $u_1, \ldots, u_d$ to their parent $v$ in any binary tree $G$ with $V(G) = V(T)$. To this end, observe that due to the constraint on the degree in $G$ there can be at most $3$ children $u_i$ of $v$ with $\operatorname{dist}_G(u_i,v) = 1$, at most another $6$ children $u_i$ with $\operatorname{dist}_G(u_i,v) = 2$ and so on. Hence, the overall minimal distance between $v$ and its children in any binary tree $G$ with $V(G) = V(T)$ is achieved if $u_1, \ldots, u_d$ form three complete binary trees whose heights differ by at most $1$ and whose roots are adjacent to $v$.\\
To ease notation when considering the lower bound on the overall distance cost between $v$ and its children, we assume w.l.o.g.\ that these three complete binary trees are rooted in $u_1$, $u_2$ and $u_3$ respectively and that the children of $u_i$ are $u_{2i+2}$ and $u_{2i+3}$, if the respective vertices exist. We define $G^v$  as the tree with these properties for $v$ and its children $u_1, \ldots, u_d$. A depiction of the respective subtrees can be seen in Figure \ref{fig_lower_bound_opt_and_Sv}.\\ 
Note that the order of the children $u_1, \ldots, u_d$ around $v$ is not relevant for our lower bound. The only subject of importance is that the children $u_1, \ldots, u_d$ are as close to their parent $v$ as possible which is the case in $G^v$. Hence, the tree $G^v$ provides a lower bound for the overall distance between the vertex $v$ and its children $u_1, \ldots, u_d$ w.r.t.\ any other binary tree $G$ with $V(G) = V(T)$.\\
To be more precise, we obtain $\operatorname{dist}_{S^v_+}(u_i, v) = k$ for all children $u_i$ of $v$ with indices $i = 2^{k-1}, \ldots, 2^k - 1$. Meanwhile, we can assume for any binary tree $G$ with $V(G) = V(T)$ w.l.o.g.\ that  $\operatorname{dist}_G(u_i, v) \geq k$  for all children $u_i$ of $v$ with indices $i = 3 \cdot (2^{k-1} - 1) + 1, \ldots, 3 \cdot (2^k - 1) = 3 \cdot 2^{k-1} - 2, \ldots, 3 \cdot 2^k - 3$. We formalize these observations in the following lemma.

\begin{figure} 
    \centering
    \begin{tikzpicture}[thick, level distance = 10mm, level 1/.style={sibling distance = 30mm}, level 2/.style={sibling distance = 14mm}, level 3/.style={sibling distance = 7mm}, every node/.style={circle, draw, inner sep=0, minimum size=1.6em}, scale=0.95]
    \node (root) {$v$}
        child {node {$u_1$} child {node {$u_4$} child {node {$u_{10}$}}  child {node {$u_{11}$}} }  child {node {$u_5$} child {node {$u_{12}$}}  child {node {$u_{13}$}}}}
        child {node {$u_2$} child {node {$u_6$} child {node {$u_{14}$}} child {node {$u_{15}$}}} child {node {$u_7$}}}
        child {node {$u_3$} child {node {$u_8$}}  child {node {$u_9$}}};
    \end{tikzpicture} \hfill
    \begin{tikzpicture}[thick, level distance = 10mm, level 1/.style={sibling distance = 30mm}, level 2/.style={sibling distance = 40mm}, level 3/.style={sibling distance = 20mm}, level 4/.style={sibling distance = 10mm}, every node/.style={circle, draw, inner sep=0, minimum size=1.6em}]
    \node (root) {$v$}
        child {node {$u_1$} 
            child {node {$u_2$} 
                child {node (u4) {$u_{4}$} child {node {$u_{8}$}} child {node {$u_{9}$}}}  
                child {node {$u_{5}$} child {node {$u_{10}$}} child {node {$u_{11}$}}}}  
            child {node {$u_3$} 
                child {node {$u_{6}$} child {node {$u_{12}$}} child {node {$u_{13}$}}}  
                child {node {$u_{7}$} child {node {$u_{14}$}} child {node {$u_{15}$}}}}}
        ;
\end{tikzpicture}
    \caption{\centering Left: The tree $G^v$ described in Section~\ref{section_worst_case_analysis} to bound the overall distance of $v$ to its children; Right: The tree $S^v_+$ created in lines \ref{algorithm_8over3_start_creating_Sv+}-\ref{algorithm_8over3_end_creating_Sv+} of Algorithm~\ref{algorithm_trees_to_binary_trees}.}
    \label{fig_lower_bound_opt_and_Sv}
\end{figure}

\begin{lemma}\label{lemma__cost_building_subtrees}
    Let the trees $G^v$ be defined as above for all $v \in V(T)$. We define
    \[
    \operatorname{cost}(S^v_+)(j) \coloneqq \sum_{i = 1}^{j} \operatorname{dist}_{S^v_+}(u_i, v) \text{\qquad and \qquad} \operatorname{cost}(G^v)(j) \coloneqq \sum_{i = 1}^{j} \operatorname{dist}_{G^v}(u_i, v)
    \]
    as the overall distance between $v$ and its first $j$ children in $S^v_+$ and $G^v$ respectively. Then
    \[
    \frac{\operatorname{cost}(S^v_+)(j)}{\operatorname{cost}(G^v)(j)} \leq \frac{5}{3}
    \]
    for all $j \geq 1$.
\end{lemma}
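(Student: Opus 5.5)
We plan to compare the two prefix sums term by term, using the explicit depth formulas already noted before the statement. In $S^v_+$ the child $u_i$ sits at depth $s_i \coloneqq \floor{\log_2 i}+1$. In $G^v$ the child $u_i$ sits at depth $g_i$, where $g_i = k$ for $3\cdot 2^{k-1}-2 \le i \le 3\cdot 2^k-3$. Thus $\operatorname{cost}(S^v_+)(j)=\sum_{i\le j}s_i$ and $\operatorname{cost}(G^v)(j)=\sum_{i\le j}g_i$, and the claim is purely arithmetic.

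A pointwise comparison fails, since $s_2/g_2=2$. We therefore argue on prefix sums and organise the indices into maximal intervals on which both $s_i$ and $g_i$ are constant. The breakpoints of these intervals are the powers of two together with the numbers $3\cdot 2^k-2$. On such an interval the ratio of the prefix sums has the form $(A+ma)/(B+mb)$ with $m$ the number of added terms. This is a Möbius function of $m$, hence monotone on the interval. So the maximum of the ratio over an interval is attained at one of its endpoints, and it suffices to check $j$ at the endpoints of the intervals, i.e.\ at $j = 2^k-1$ and $j = 3\cdot 2^k-3$ together with their neighbours.

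For large $j$ we use a crude estimate. Since $\log_2 3<2$, we have $s_i \le g_i+2$ for every $i$, hence
\[
\operatorname{cost}(S^v_+)(j)\le \operatorname{cost}(G^v)(j)+2j .
\]
Therefore the claim follows as soon as $\operatorname{cost}(G^v)(j)\ge 3j$, i.e.\ once the average depth in $G^v$ is at least $3$. The first $21$ children of $G^v$ contribute $3+12+36=51$, and from index $22$ on each child has depth at least $4$. So it suffices that $51+4(j-21)\ge 3j$, which holds for all $j\ge 33$.

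This leaves a finite check for $j\le 32$, carried out only at the breakpoints by the monotonicity argument. At $j=3$ the ratio is $\frac{1+2+2}{3}=\frac{5}{3}$. At $j=9$ it is $\frac{25}{15}=\frac{5}{3}$. The remaining breakpoints give smaller ratios, for example $\frac{17}{11}$ at $j=7$ and $\frac{49}{33}$ at $j=15$. The main obstacle is organising this finite verification cleanly. We would present it as a short table of $\operatorname{cost}(S^v_+)(j)$ and $\operatorname{cost}(G^v)(j)$ at the breakpoints $1,2,3,4,7,9,15,16,21,31,32$. If the large-$j$ threshold turns out awkward, we would sharpen the difference bound instead: $s_i-g_i\le 1$ except for $i$ in short ranges just above powers of two, which would lower the threshold.
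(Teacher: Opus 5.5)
Your proof is correct, but it is organised differently from the paper's. The paper tabulates the ratio only for $j\le 9$ and then argues incrementally via the mediant inequality. For every index $i\ge 10$ the pair $(s_i,g_i)$ equals $(k+1,k-1)$ with $k\ge 4$ or $(k+1,k)$ with $k\ge 3$, so $s_i/g_i\le 5/3$ pointwise. Hence adding $u_{j+1}$ to prefix sums whose ratio is at most $5/3$ keeps the ratio at most $5/3$. You dismiss the pointwise comparison because $s_2/g_2=2$, but it fails only for $i\in\{2,3,8,9\}$, and the paper's observation is precisely that it holds from $i=10$ on.

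You handle the tail differently: the additive bound $s_i\le g_i+2$ (which follows from $i\le 3\cdot 2^{g_i}-3<2^{g_i+2}$) plus the average-depth estimate $\operatorname{cost}(G^v)(j)\ge 3j$ gives the claim for $j\ge 33$. You then reduce $j\le 32$ to interval endpoints by the monotonicity of $(A+ma)/(B+mb)$, which is really the same mediant mechanism the paper uses. The remaining values are indeed fine: $79/51$ at $j=21$, $129/91$ at $j=31$, and $27/19$ at $j=32$.

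The paper's route needs a shorter table and no threshold computation. Yours needs a longer, though still small, finite check. In exchange, your tail estimate is cruder and more robust, since it requires no case analysis of where the two depth sequences jump. It is also exactly the estimate $0\le \operatorname{dist}_{S^v_+}(v,u_i)-\operatorname{dist}_{G^v}(v,u_i)\le 2$ that the paper later uses in Lemma~\ref{lemmaToOne}.
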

\begin{proof}
    For $j=1, \ldots, 9$, the respective ratios can be seen in Table \ref{table_case_1_to_10}. Reconsider our analysis above and observe that for $k \geq 3$, we obtain $\operatorname{dist}_{S^v_+}(u_i, v) = k+1$ and $\operatorname{dist}_{G^v}(u_i,v) = k-1$ for all $u_i$ with $i = 2^{k}, \ldots, 3 \cdot 2^{k-1} - 3$. Similarly, we obtain $\operatorname{dist}_{S^v_+}(u_i, v) = k+1$ and $\operatorname{dist}_{G^v}(u_i,v) = k$ for all $u_i$ with $i = 3 \cdot 2^{k-1}-2, \ldots, 2^{k+1}-1$.\\
    Let $j \geq 9$ and observe the increase in $\operatorname{cost}(S^v_+)(j)$ and $\operatorname{cost}(G^v)(j)$ respectively when considering an additional vertex $u_{j+1}$. If $j+1 \leq 3 \cdot 2^{\floor{\log_2(j+1)} - 1} - 3$, then $\operatorname{cost}(S^v_+)(j)$ increases by $\floor{\log_2(j+1)} + 1$ and $\operatorname{cost}(G^v)(j)$ increases by $\floor{\log_2(j+1)} - 1$. 
    The conditions $j \geq 9$ and
$j+1 \leq 3 \cdot 2^{\floor{\log_2(j+1)}-1}-3$
imply that $j \geq 15$. Hence,
    \[
    \frac{\floor{\log_2(j+1)} + 1}{\floor{\log_2(j+1)} - 1} \leq \frac{5}{3}
    \]
    by monotonicity.\\
    If $3 \cdot 2^{\floor{\log_2(j+1)} - 1} - 2 \leq  j+1 \leq 2^{\floor{\log_2(j+1)}+1} - 1$, then $\operatorname{cost}(S^v_+)(j)$ increases by $\floor{\log_2(j+1)} + 1$ and $\operatorname{cost}(G^v)(j)$ increases by $\floor{\log_2(j+1)}$. Hence, 
    \[
    \frac{\floor{\log_2(j+1)} + 1}{\floor{\log_2(j+1)}} \leq \frac{4}{3} \leq  \frac{5}{3}
    \]
    by monotonicity.\\
    Together with the induction hypothesis, we can conclude that
    \[
    \frac{\operatorname{cost}(S^v_+)(j+1)}{\operatorname{cost}(G^v)(j+1)} \leq \frac{5}{3}
    \]
    in both cases, proving our claim.
\end{proof}

\begin{table}[ht]
    \centering
    \begin{tabular}{|c|c|c|c|} 
        \hline
        $j$ & $\operatorname{cost}(S^v_+)(j)$ & $\operatorname{cost}(G^v)(j)$ & $\frac{\operatorname{cost}(S^v_+)(j)}{\operatorname{cost}(G^v)(j)}$  \\[6pt]
        \hline
        $1$ & $1$ & $1$ & $1$ \\[3pt]
        $2$ & $3$ & $2$ & $\frac{3}{2}$ \\[3pt]
        $3$ & $5$ & $3$ & $\frac{5}{3}$ \\[3pt]
        $4$ & $8$ & $5$ & $\frac{8}{5}$ \\[3pt]
        $5$ & $11$ & $7$ & $\frac{11}{7}$ \\[3pt]
        $6$ & $14$ & $9$ & $\frac{14}{9}$ \\[3pt]
        $7$ & $17$ & $11$ & $\frac{17}{11}$ \\[3pt]
        $8$ & $21$ & $13$ & $\frac{21}{13}$ \\[3pt]
        $9$ & $25$ & $15$ & $\frac{5}{3}$ \\[3pt]
        \hline
    \end{tabular}\\
    \caption{Sum of distances between the first $j$ children and the root $v$ in $S^v_+$ for $j=1, \ldots, 9$ compared to the overall distances in $G^v$.}
    \label{table_case_1_to_10}
\end{table}

\begin{corollary} \label{corollary_bound_for_arranging_children}
    Let $\sum_{v \in V(T)} \operatorname{cost}(S^v_+)$ be the sum of the distance costs for all vertices $v \in V(T)$ due to the subtrees created in line \ref{algorithm_8over3_end_creating_Sv+} of Algorithm~\ref{algorithm_trees_to_binary_trees}. Then
    \[
    \frac{\sum_{v \in V(T)} \operatorname{cost}(S^v_+)}{\OPT(T)} \leq \frac{5}{3}
    \]
\end{corollary}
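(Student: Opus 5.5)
The plan is to bound $\OPT(T)$ from below vertex by vertex and then invoke Lemma~\ref{lemma__cost_building_subtrees} with $j=d$ for each vertex.

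Fix an optimal solution $S^*$ of $T$. Rooting $T$ at $r$, every edge of $T$ joins a vertex $v$ to exactly one of its children. So the edge set of $T$ is partitioned into the stars $\{vu : u \in N^+_T(v)\}$, $v\in V(T)$, and
\[
\OPT(T)=\sum_{v\in V(T)} \sum_{u\in N^+_T(v)} \operatorname{dist}_{S^*}(u,v).
\]

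\textbf{Lower bound per vertex.} Fix $v$ with children $u_1,\ldots,u_d$, where $d\ge 1$; leaves contribute $0$ to both sides and can be ignored. In the binary tree $S^*$, the number of vertices at distance exactly $k\ge1$ from $v$ is at most $3\cdot 2^{k-1}$, since $v$ has at most $3$ neighbours and every other vertex has at most $2$ further neighbours away from $v$. Sort the distances $\operatorname{dist}_{S^*}(u_i,v)$ increasingly. By the counting bound, the $i$-th smallest distance is at least the distance of the $i$-th vertex in $G^v$, which places exactly $3\cdot2^{k-1}$ children at each level $k$ until the children run out. Hence
\[
\sum_{i=1}^d \operatorname{dist}_{S^*}(u_i,v)\ \ge\ \operatorname{cost}(G^v)(d).
\]
This is exactly the observation made before Lemma~\ref{lemma__cost_building_subtrees}. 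I would still state it as the above sorted-distance comparison to make it rigorous, because the labelling of children in $G^v$ is arbitrary and only the multiset of distances matters.

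\textbf{Apply the lemma.} By construction, $\operatorname{cost}(S^v_+)=\operatorname{cost}(S^v_+)(d)$. Lemma~\ref{lemma__cost_building_subtrees} with $j=d$ gives
\[
\operatorname{cost}(S^v_+)(d)\le \tfrac53\operatorname{cost}(G^v)(d)\le \tfrac53\sum_{i=1}^d\operatorname{dist}_{S^*}(u_i,v).
\]
Summing over all $v\in V(T)$ and using the partition identity above yields $\sum_v \operatorname{cost}(S^v_+)\le \frac53\OPT(T)$, which is the claim.

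\textbf{Main obstacle.} The only delicate point is the per-vertex lower bound. The bound must hold simultaneously for all $v$ with the same $S^*$, which is fine because each star is bounded separately and the stars are edge-disjoint. One must also check that the lower bound does not depend on the order in which the algorithm sorts the children, since it only uses the multiset of distances.
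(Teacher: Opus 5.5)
Your proposal is correct and follows the paper's proof. The paper likewise lower-bounds $\OPT(T)$ star by star via $\operatorname{cost}(G^v)(d^+_T(v))$, applies Lemma~\ref{lemma__cost_building_subtrees} with $j=d$, and sums over all $v$. Your sorted-distance argument, using at most $3\cdot 2^{k-1}$ vertices at distance $k$ from $v$, makes rigorous the per-vertex lower bound that the paper only states informally before the lemma.
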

\begin{proof}
    Follows immediately from Lemma~\ref{lemma__cost_building_subtrees} when bounding $\OPT(T)$ locally w.r.t.\ $G^v$ and summing up over all vertices $v \in V(T)$. 
\end{proof}

\noindent Again, let $v \in V(T)$ be some arbitrary vertex and fix the iteration in which Algorithm~\ref{algorithm_trees_to_binary_trees} considers $v$. We bound the additional costs that we potentially incur by not having a direct edge between $w_i$ and $u_i$ anymore after adding the subtrees $S^{u_i}-u_i$ to $S^v_+$ in line \ref{algorithm_8over3_connecting_subtrees} of Algorithm~\ref{algorithm_trees_to_binary_trees}.\\ 
To this end, let $B$ be some arbitrary (binary) rooted tree. We define the level sets of $B$ 
\[
L_k(B) \coloneqq \setbuild{v \in V(B)}{\max_{u \in B_v} \operatorname{dist}_{B_v}(u,v) = k}
\]
which contain all vertices $v \in V(B)$ which have a distance of $k$ to the leaf furthest away in their respective subtree $B_v$.\\
Let $k \in \{1, \ldots, \floor{\log_2(d)}\}$. Then, for every vertex $u_i \in L_k(S^v_+)$ and all its children $w_{i,j} \in N^+_T(u_i)$ the distance $\operatorname{dist}_{S^v}(u_i, w_{i,j})$ between $u_i$ and the $w_{i,j}$ increases by at most $k$ compared to the distance $\operatorname{dist}_{S^{u_i}}(u_i, w_{i,j})$ in the subtree $S^{u_i}$. Note that $\operatorname{dist}_{S^v}(u_i, w_{i,j}) = \operatorname{dist}_{S}(u_i, w_{i,j})$, i.e.\ the distance between $u_i$ and its children does not change after this iteration. We further define in the following $l_k(B) \coloneqq \abs{L_k(B)}$ for any binary tree $B$.
\begin{remark}
    Note that $L_k(S^v_+) = \emptyset$ for all $k \geq \floor{\log_2(d)} +2$ and $L_k(S^v_+) = \{v\}$ for $k = \floor{\log_2(d)} +1$, since the children of $v$ form a complete binary tree whose root is connected to $v$ in $S^v_+$.
\end{remark}

\begin{lemma} \label{lemma_subtree_adding_cost}
    Let $B$ be some complete binary tree with $n$ vertices.
    Then 
    \[
    \sum_{k= 1}^{\floor{\log_2(n)}} k \cdot l_k(B) \leq n-1.
    \]
\end{lemma}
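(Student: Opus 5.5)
The plan is to read the left-hand side as the sum of heights over all vertices. Write $h(v)$ for the height of $B_v$, so $v\in L_{h(v)}(B)$. Then
\[
\sum_{k\ge 1} k\, l_k(B)=\sum_{v\in V(B)} h(v).
\]
The sum may as well run over all $k\ge 1$, since no vertex of a complete binary tree on $n$ vertices has height exceeding $\floor{\log_2(n)}$. Leaves contribute $0$.

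We prove $H(B):=\sum_v h(v)\le n-1$ for every complete binary tree $B$ with $n$ vertices, using the standard charging argument for building a heap in linear time. For each vertex $v$ with $h(v)\ge 1$, take the path that starts at $v$, steps once to its left child, and then always goes to the right child until it reaches a leaf. This path has $h(v)$ edges, because in a complete binary tree the right spine below the left child reaches depth $h(v)-1$ under that child. We check this via Fact~\ref{fact_complete_binary_tree}: $B_v$ is complete, so its left subtree is complete of height exactly $h(v)-1$. Any rightmost root-to-leaf path of a complete binary tree of height $h'$ has length at least $h'-1$, and it has length exactly $h'$ when the tree is perfect.

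Here lies the main obstacle. The left subtree of a complete tree is not always perfect, and its rightmost path may have length only $h(v)-2$. To avoid this, we replace the exact path-length claim by an injective charging of edges. The key claim is that the paths for distinct $v$ are edge-disjoint. Each edge $xy$, with $y$ a child of $x$, is used only by the path starting at the unique vertex $v$ reached from $x$ by going up through right-child links as long as possible and then one more step up through a left-child link (or $v=x$ if $y$ is a left child). This gives injectivity. When a path is too short, we extend it by one more step, which is possible in the nonperfect case because the last node on the right spine has a left child. We check that these extension edges, which are left-child edges at the bottom, are not claimed by anyone else: the left-child edges out of $x$ are claimed only by $x$ itself, and $x$ here has height at most $1$.

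If the bookkeeping becomes messy, the fallback is a short induction on the recursive structure. The root's subtrees are complete with sizes $n_1+n_2=n-1$, one of them perfect, and $h(\text{root})=1+\max(h_1,h_2)$. This gives
\[
H(B)=H(B_1)+H(B_2)+h(\text{root}).
\]
For a perfect tree of height $h$ we verify directly that $H=2^{h+1}-h-2=n-(h+1)\le n-1$. We therefore strengthen the induction hypothesis to $H(B)\le n-1-\bigl(\text{number of ones in the binary expansion of } n\bigr)+1$, which is the known identity $\sum_v h(v)=n-s_2(n)$ for heaps. From this the bound $\le n-1$ follows at once, since $s_2(n)\ge 1$. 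We would present this identity-based induction as the actual proof and treat the charging picture as intuition.
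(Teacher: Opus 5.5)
Your designated proof, the induction over the root decomposition with the strengthened statement $H(B)=n-s_2(n)$, is sound and genuinely different from the paper's. As written, however, its inductive step is only asserted, and the charging sketch you offer as intuition is false.

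\textbf{The charging sketch.} Take $n=8$ with the labelling of Definition~\ref{definition_complete_binary_tree}. The path of $u_1$ is $u_1u_2u_5$, which has two edges although $h(u_1)=3$. It ends at the leaf $u_5$, so it cannot be extended. The path of $u_2$ is $u_2u_4$, and its extension $u_4u_8$ is exactly the path of $u_4$. In general, a vertex with a left child has height at least $1$ and therefore claims that left edge itself, so your disjointness check fails. More fundamentally, a path that starts with a step to a left child never uses an edge of the root's right spine (here $u_1u_3$ and $u_3u_7$). But $H(B)=7=n-1$ forces every edge to be charged. So no repair of this shape can work for $n=2^h\ge 4$.

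\textbf{The inductive step.} It does go through, but you need to split on which root subtree is perfect. Write $n=2^h+r$ with $h=\lfloor\log_2 n\rfloor\ge 1$ and $0\le r<2^h$, and set $s_2(0)=0$.
\begin{itemize}
\item If $r\ge 2^{h-1}-1$, the left subtree is perfect with $2^h-1$ vertices and the right one has $r$ vertices. Then $H(B)=(2^h-1-h)+(r-s_2(r))+h=n-s_2(n)$, since $s_2(n)=1+s_2(r)$.
\item Otherwise, the right subtree is perfect with $2^{h-1}-1$ vertices and the left one has $2^{h-1}+r$ vertices. Since $r<2^{h-1}$, we have $s_2(2^{h-1}+r)=s_2(n)$. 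Then $H(B)=(2^{h-1}+r-s_2(n))+(2^{h-1}-h)+h=n-s_2(n)$.
\end{itemize}

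\textbf{Comparison with the paper.} The paper needs neither this case split nor binary expansions. It deletes all leaves, which leaves the complete tree on $u_1,\dots,u_{\lfloor n/2\rfloor}$ with every height lowered by exactly one. Hence $\sum_k k\,l_k(B)=\sum_k k\,l_k(B')+n(B')\le (n/2-1)+n/2$, by induction on the height. That argument yields only the inequality. Your route gives the exact value: the paper's recursion unrolls to $\sum_{j\ge 1}\lfloor n/2^j\rfloor$, which equals $n-s_2(n)$ by Legendre's formula. It also shows that equality holds exactly when $n$ is a power of two, which sharpens the paper's remark on tightness.
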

\begin{proof}
    Induction w.r.t.\ the height $h= \floor{\log_2(n)}$ of $B$. For $h=0$ the tree $B$ only has one vertex and the claim holds. Let $h \geq 0$ and assume that the claim holds for every complete binary tree $B'$ with $n'$ vertices such that $\floor{\log_2(n')} = h$.\\
    Consider some complete binary tree $B$ with $n$ vertices such that $\floor{\log_2(n)} = h+1$. By Fact \ref{fact_complete_binary_tree}, the number of leaves of a complete binary tree with $n$ vertices is at least $\frac{n}{2}$. If we remove all leaves from $B$, we are left with some tree $B'$ with $n' \leq \frac{n}{2}$ vertices and height $\floor{\log_2(n)} - 1$. From our observations and the induction hypothesis it follows that
    \begin{equation} \label{lemma_subtree_adding_cost_eq1}
        \sum_{k= 1}^{\floor{\log_2(n')}} k \cdot l_k(B') = \sum_{k= 1}^{\floor{\log_2(n)} - 1} k \cdot l_k(B') \leq \frac{n}{2}-1.
    \end{equation}
    Further, consider some vertex $u \in V(B')$ and observe that adding the leaves again would increase the distance of $u$ to its furthest child in the subtree $B'(u)$ by exactly $1$. Consequently, $L_k(B') = L_{k+1}(B)$ for all $k = 0, \ldots, \floor{\log_2(n)}-1$ and 
    \[
    \sum_{k= 1}^{\floor{\log_2(n)}} k \cdot l_k(B)  = \sum_{k= 0}^{\floor{\log_2(n)} - 1} (k+1) \cdot l_k(B') \leq \frac{n}{2} - 1 + \frac{n}{2} = n-1.
    \]
    The inequality follows from \eqref{lemma_subtree_adding_cost_eq1} and the fact that $\sum_{k= 0}^{\floor{\log_2(n)} - 1} l_k(B') \leq \frac{n}{2}$, since $B'$ has at most $\frac{n}{2}$ vertices. 
\end{proof}
\begin{remark}
    The bound in Lemma~\ref{lemma_subtree_adding_cost} is tight if $n = 2^k$ for some $k \in \NN$.
\end{remark}

\noindent With Lemma~\ref{lemma_subtree_adding_cost}, we can now bound the additional costs incurred when adding the subtrees $S^{u_i}-u_i$  to $S^v_+$ in line \ref{algorithm_8over3_connecting_subtrees} of Algorithm~\ref{algorithm_trees_to_binary_trees} by the total number of children of the vertices $u_1, \ldots, u_d$.

\begin{lemma}\label{lemma_adding_subtree_cost}
    Let $v \in V(T)$ be some vertex in $T$ with children $u_1, \ldots, u_d$. Then
    \[
    \sum_{i = 1}^{d} \sum_{w \in N_T^+(u_i)} \operatorname{dist}_{S^v}(u_i, w) - \operatorname{dist}_{S^{u_i}}(u_i, w) \leq \sum_{i = 1}^{d} d_T^+(u_i).
    \]
\end{lemma}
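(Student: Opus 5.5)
The plan is to bound the increase for each $u_i$ separately by a purely local quantity, namely the height of $u_i$ in $S^v_+$. Then a rearrangement (Chebyshev) argument, together with the sorting $d^+_T(u_1)\le\dots\le d^+_T(u_d)$ and Lemma~\ref{lemma_subtree_adding_cost}, will turn the resulting weighted sum into $\sum_i d^+_T(u_i)$.

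\emph{Step 1 (local bound).} Fix $i$ and let $k_i$ be the index with $u_i\in L_{k_i}(S^v_+)$, i.e.\ the height of $u_i$ in $S^v_+$ (leaves have $k_i=0$). If $w_i$ does not exist, then $S^{u_i}$ consists of $u_i$ alone, so $d^+_T(u_i)=0$ and there is nothing to show.

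Otherwise every child $w\in N^+_T(u_i)$ lies in $S^{u_i}-u_i$, which is the component containing $w_i$. Hence
\[
\operatorname{dist}_{S^{u_i}}(u_i,w)=1+\operatorname{dist}_{S^{u_i}-u_i}(w_i,w).
\]
In $S^v$ the subtree $S^{u_i}-u_i$ is reattached unchanged via an edge from $w_i$ to a leaf $x$ of $(S^v_+)_{u_i}$. Therefore
\[
\operatorname{dist}_{S^v}(u_i,w)=\operatorname{dist}_{S^v_+}(u_i,x)+1+\operatorname{dist}_{S^{u_i}-u_i}(w_i,w).
\]
Since $x$ is a descendant of $u_i$ in $S^v_+$, we have $\operatorname{dist}_{S^v_+}(u_i,x)\le k_i$. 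So each child $w$ of $u_i$ contributes an increase of at most $k_i$, and the left-hand side of the lemma is at most $\sum_{i=1}^d k_i\, d^+_T(u_i)$.

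\emph{Step 2 (monotonicity).} In the complete binary tree on $u_1,\dots,u_d$, with heap indexing, the height $k_i$ is non-increasing in $i$. This follows from Definition~\ref{definition_complete_binary_tree}: the height of $u_i$ equals $\lfloor\log_2(d/i)\rfloor$, or a similar expression that is monotone in $i$. Checking this carefully is the one point that needs some care. Line~\ref{algorithm_8over3_choosing_v_and_sorting_children} of Algorithm~\ref{algorithm_trees_to_binary_trees} guarantees that $d^+_T(u_i)$ is non-decreasing in $i$. Thus the two sequences are oppositely ordered, and Chebyshev's sum inequality gives
\[
\sum_{i=1}^d k_i\, d^+_T(u_i)\le \frac1d\Big(\sum_{i=1}^d k_i\Big)\Big(\sum_{i=1}^d d^+_T(u_i)\Big).
\]

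\emph{Step 3 (counting).} Apply Lemma~\ref{lemma_subtree_adding_cost} to the complete binary tree $B$ formed by $u_1,\dots,u_d$, which has $n=d$ vertices. In $S^v_+$ the vertex $v$ sits above $u_1$, but the heights of the $u_i$ in $S^v_+$ are the same as their heights in $B$. Leaves contribute $0$, so the lemma gives
\[
\sum_i k_i=\sum_{k\ge1}k\,l_k(B)\le d-1.
\]
Combining with Step 2 yields a bound of $\frac{d-1}{d}\sum_i d^+_T(u_i)\le\sum_i d^+_T(u_i)$, which is the claim.

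The main obstacle I expect is Step 2: stating cleanly that heights in a heap-ordered complete binary tree are monotone in the index, and making sure that the sorting of the children of $v$ is exactly the direction Chebyshev's inequality needs. Step 1 only requires noting that the distances inside $S^{u_i}-u_i$ are not altered by the reattachment.
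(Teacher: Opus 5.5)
Your proof is correct and follows the same route as the paper: bound the increase for each child of $u_i\in L_k(S^v_+)$ by $k$, replace $d^+_T(u_i)$ by the average $\frac1d\sum_j d^+_T(u_j)$ using the sorting in line~\ref{algorithm_8over3_choosing_v_and_sorting_children}, and finish with Lemma~\ref{lemma_subtree_adding_cost}. The paper justifies the averaging step only informally, whereas you prove it explicitly: the height of $u_i$ is $\lfloor\log_2(d/i)\rfloor$, which is non-increasing in $i$, and Chebyshev's sum inequality then gives the bound.
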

\begin{proof}
    For every child $u_i \in L_k(S^v_+)$ of $v$ the distance $\operatorname{dist}_{S^v}(u_i, w)$ to its children $w \in N_T^+(u_i)$ increases by at most $k$ compared to the distance $\operatorname{dist}_{S^{u_i}}(u_i, w)$ in the subtree $S^{u_i}$. Hence, the additional costs incurred by adding the subtrees $S^{u_i}-u_i$ to $S^v_+$ in line \ref{algorithm_8over3_connecting_subtrees} of Algorithm~\ref{algorithm_trees_to_binary_trees} are 
    \begin{equation} \label{equation_proof_adding_subtree_cost}
        \sum_{k = 1}^{\floor{\log_2(d)}} k \sum_{u_i \in L_k(S^v_+)} d_T^+(u_i) \leq \sum_{k = 1}^{\floor{\log_2(d)}} k \sum_{u_i \in L_k(S^v_+)} \frac{1}{d} \sum_{i = 1}^{d} d_T^+(u_i),
    \end{equation}
    due to the construction of $S^v_+$ in lines \ref{algorithm_8over3_start_creating_Sv+}-\ref{algorithm_8over3_end_creating_Sv+} and the requirement $d_T^+(u_1) \leq \ldots \leq d_T^+(u_d)$ in line \ref{algorithm_8over3_choosing_v_and_sorting_children} of Algorithm~\ref{algorithm_trees_to_binary_trees}. Further, note that the distance between the vertices $u_{\frac{d}{2}+1}, \ldots u_d$ and their respective children does not increase. Hence, averaging over all children of $u_1, \ldots, u_d$ can only increase the sum on the l.h.s.\ of \eqref{equation_proof_adding_subtree_cost}. The claim follows by using Lemma~\ref{lemma_subtree_adding_cost}
    \begin{align*}
        \sum_{k = 1}^{\floor{\log_2(d)}} k \sum_{u_i \in L_k(S^v_+)} \frac{1}{d} \sum_{i = 1}^{d} d^+_T(u_i) &= \pare{\frac{1}{d} \sum_{i = 1}^{d} d^+_T(u_i)} \sum_{k = 1}^{\floor{\log_2(d)}} k \cdot l_k(S^v_+)&\\
        &\leq \pare{\frac{1}{d} \sum_{i = 1}^{d} d_T^+(u_i)} \cdot (d-1) & \text{(by Lemma~\ref{lemma_subtree_adding_cost})}\\
        &\leq \sum_{i = 1}^{d} d_T^+(u_i).&
    \end{align*}
\end{proof}

\begin{corollary}\label{corollary_costs_adding_subtrees}
   The overall additional costs for adding the subtrees in line \ref{algorithm_8over3_connecting_subtrees} of Algorithm~\ref{algorithm_trees_to_binary_trees} are at most $\OPT(T)$.
\end{corollary}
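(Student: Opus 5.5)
We will sum Lemma~\ref{lemma_adding_subtree_cost} over all vertices $v \in V(T)$ and compare the result to a trivial lower bound on $\OPT(T)$.

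First we fix the bookkeeping. When Algorithm~\ref{algorithm_trees_to_binary_trees} processes $v$, the only distances that change are those between each child $u_i$ of $v$ and the children $w \in N^+_T(u_i)$ of $u_i$ in $T$. The subtree $S^{u_i}-u_i$ is reattached as a whole, so distances inside it are preserved. The new distances $\operatorname{dist}_{S^v}(u_i,v)$ are exactly what $\operatorname{cost}(S^v_+)$ accounts for, and are not part of this additional cost. Moreover, once $S^v$ is built, the distance between $u_i$ and $w$ never changes again, since later iterations only move $v$ together with the whole block hanging below it. This is the observation $\operatorname{dist}_{S^v}(u_i,w)=\operatorname{dist}_S(u_i,w)$ already noted in the text.

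So the total additional cost is
\[
\sum_{v\in V(T)} \sum_{u \in N^+_T(v)} \sum_{w\in N^+_T(u)} \left(\operatorname{dist}_{S^v}(u,w)-\operatorname{dist}_{S^{u}}(u,w)\right).
\]
Applying Lemma~\ref{lemma_adding_subtree_cost} to each $v$ bounds this by
\[
\sum_{v\in V(T)} \sum_{u\in N^+_T(v)} d^+_T(u).
\]
Every non-root vertex $u$ has exactly one parent, so this double sum equals $\sum_{u \neq r} d^+_T(u)$. That is at most the number of edges of $T$ whose upper endpoint is not the root, and in particular at most $|E(T)| = n-1$.

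It remains to check that $\OPT(T) \ge |E(T)|$. This holds because $u \ne w$ for every edge $uw\in E(T)$, so every edge contributes $\operatorname{dist}_{S^*}(u,w)\ge 1$. Combining the two bounds gives that the overall additional cost is at most $\OPT(T)$.

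The only step that needs care is making sure no edge's increase is charged twice and no increase is missed across iterations. Each edge $uw$ with $u$ a non-root vertex has its increase incurred exactly once, in the iteration processing the parent of $u$, and the increases telescope correctly. This follows from the invariance of the distance from $u_i$ to its children after its parent's iteration, since the tree $S^v$ is afterwards embedded rigidly as a subtree. Edges incident to the root are never stretched, because the root has no parent iteration.
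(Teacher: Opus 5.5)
Your proof is correct and is essentially the paper's argument: you sum Lemma~\ref{lemma_adding_subtree_cost} over all $v\in V(T)$ and bound the resulting $\sum_{u\neq r} d^+_T(u)$ by $\OPT(T)$ using $\operatorname{dist}_{S^*}(u,w)\ge 1$ for every edge. The only difference is that the paper makes this comparison locally at each $v$, while you make it globally via $|E(T)|=n-1$. One small wording fix: in the parent's iteration $v$ is detached from the block $S^v-v$ rather than moved together with it, but since $u_i,w\in V(S^v)\setminus\{v\}$ and that block is reattached intact, the invariance you use still holds.
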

\begin{proof}
    Let $v \in V(T)$ be some arbitrary vertex with children $u_1, \ldots, u_d$. By Lemma \ref{lemma_adding_subtree_cost}, the cost of adding all subtrees $S^{u_i}-u_i$ for $i = 1, \ldots, d$ to $S^v_+$ is at most $\sum_{i = 1}^{d} d^+_T(u_i)$. The distance $\operatorname{dist}_{S^*}(u_i, w)$ between $u_i$ and its children $w \in N^+_T(u_i)$ is at least $1$ for any optimal solution $S^*$ with $V(S^*) = V(T)$. Hence, 
    \[
    \sum_{i=1}^{d} \sum_{w \in N_T^+(u_i)} \operatorname{dist}_{S^*}(u_i, w) \geq \sum_{i = 1}^{d} d_T^+(u_i)
    \]
    Summing up over all vertices $v \in V(T)$ yields our claim.
\end{proof}
\noindent Finally, we can sum up our results to prove the approximation ratio of Algorithm~\ref{algorithm_trees_to_binary_trees}.
\begin{corollary}
    Let $T$ be some arbitrary tree and $S$ be the tree created by Algorithm~\ref{algorithm_trees_to_binary_trees} with input $T$. Then 
    \[
    \frac{\operatorname{cost}_T(S)}{\OPT(T)} \leq \frac{8}{3}.
    \]
\end{corollary}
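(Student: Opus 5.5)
The plan is to decompose $\operatorname{cost}_T(S)$ edge by edge according to the iteration of Algorithm~\ref{algorithm_trees_to_binary_trees} in which each edge's distance is fixed. Then we apply Corollary~\ref{corollary_bound_for_arranging_children} and Corollary~\ref{corollary_costs_adding_subtrees} to the two resulting parts.

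Every edge of $T$ has the form $vu_i$ with $u_i$ a child of $v$. When the iteration for $v$ builds $S^v_+$, the distance between $v$ and $u_i$ is determined inside $S^v_+$. The attachment of the subtrees $S^{u_i}-u_i$ hangs everything below leaves of $S^v_+$, so it does not alter the path between $v$ and $u_i$. Later iterations, for ancestors of $v$, only delete the single edge from $v$'s parent to $v$'s unique child, which is $u_1$, and reattach $S^v-v$ below some vertex. This changes only distances involving $v$ and vertices outside $S^v$. For the edge $vu_i$ itself, the $v$--$u_i$ path in $S^v$ goes through $u_1$, so I must check how it changes. After the parent's iteration, $v$ sits in the parent's complete tree and $S^v-v$ hangs below some leaf descendant of $v$. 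So $\operatorname{dist}_S(v,u_i)=\operatorname{dist}_{S^v}(v,u_i)+\delta_v$, where $\delta_v$ is the extra distance from $v$ to the attachment leaf.

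This extra term is exactly what Lemma~\ref{lemma_adding_subtree_cost} accounts for, applied at the parent of $v$ with $v$ playing the role of some $u_i$. That lemma bounds the increase $\operatorname{dist}_{S^{p}}(v,w)-\operatorname{dist}_{S^{v}}(v,w)$ over the children $w$ of $v$. The increase occurs during the parent's iteration and is unchanged afterwards, as the paper notes.

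Hence I write
\[
\operatorname{cost}_T(S)=\sum_{v\in V(T)}\operatorname{cost}(S^v_+)+\sum_{v\in V(T)}\sum_{i}\sum_{w\in N_T^+(u_i)}\bigl(\operatorname{dist}_{S^v}(u_i,w)-\operatorname{dist}_{S^{u_i}}(u_i,w)\bigr),
\]
using $\operatorname{dist}_{S^{u_i}}(u_i,w)=\operatorname{dist}_{S^{u_i}_+}(u_i,w)$ and that distances are final after the parent iteration. Corollary~\ref{corollary_bound_for_arranging_children} bounds the first sum by $\frac{5}{3}\OPT(T)$, and Corollary~\ref{corollary_costs_adding_subtrees} bounds the second by $\OPT(T)$. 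Adding the two gives $\frac{8}{3}\OPT(T)$.

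The main obstacle is the telescoping identity itself. I must verify carefully that each edge's distance is changed in exactly one later iteration, the parent's. A further iteration (the grandparent's) reattaches $S^{p}-p$ as a whole block, so it preserves the internal distances among $v$, $u_i$ and $w$, since their paths avoid $p$. I would argue this by noting that the grandparent's iteration only deletes the edge between $p$ and its single child in $S^p$ and reconnects that child's side intact.
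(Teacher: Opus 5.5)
Your proof is correct and is essentially the paper's argument: split $\operatorname{cost}_T(S)$ into $\sum_{v}\operatorname{cost}(S^v_+)$ plus the distance increases caused by reattaching the blocks $S^{u_i}-u_i$, then bound these by $\frac{5}{3}\OPT(T)$ and $\OPT(T)$ via Corollaries~\ref{corollary_bound_for_arranging_children} and~\ref{corollary_costs_adding_subtrees}. Your extra check supplies a detail the paper leaves implicit: each edge's distance changes only once more, in the parent's iteration, because every later iteration moves $S^p-p$ as an intact block, so the decomposition is in fact an identity.
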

\begin{proof}
    The total cost of Algorithm~\ref{algorithm_trees_to_binary_trees} consists of the costs incurred when creating the subtrees $S^v_+$ in lines \ref{algorithm_8over3_start_creating_Sv+}-\ref{algorithm_8over3_end_creating_Sv+} and the potential increase in costs when adding the subtrees $S^{u_i}-u_i$ to $S^v_+$ in line \ref{algorithm_8over3_connecting_subtrees}. Hence, by Corollaries \ref{corollary_bound_for_arranging_children} and \ref{corollary_costs_adding_subtrees}
    \[
    \frac{\operatorname{cost}_T(S)}{\OPT(T)} \leq \frac{\sum_{v \in V(T)} \operatorname{cost}(S^v_+) + \OPT(T)}{\OPT(T)} \leq \frac{5}{3} + 1 = \frac{8}{3}.
    \]
\end{proof}

\subsection{Convergence for large degrees} \label{section_convergence_for_large_degrees}

Consider a tree $T$ in which every internal vertex has a degree of at least $\Delta + 1$, i.e.\ in which every vertex has either at least $\Delta$ or $0$ children. We show that the approximation ratio of Algorithm~\ref{algorithm_trees_to_binary_trees} tends to $1$ as $\Delta \rightarrow \infty$. 

\begin{lemma}\label{lemmaToOne}
    Let $cost(S^v_+)(j)$ and $\cost(G^v)(j)$ be defined as in Lemma~\ref{lemma__cost_building_subtrees}. Then
    \[
    \frac{\cost(S^v_+)(j)}{\cost(G^v)(j)} \xrightarrow{j \rightarrow \infty} 1.
    \]
\end{lemma}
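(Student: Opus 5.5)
Both costs are sums of distances that grow like $\log_2$ of the index, so I will compare each with $j\log_2 j$ and show that both are asymptotically $j\log_2 j$.

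For $S^v_+$ we have exactly $\operatorname{dist}_{S^v_+}(u_i,v)=\floor{\log_2 i}+1$. Hence
\[
\cost(S^v_+)(j)=\sum_{i=1}^{j}\bigl(\floor{\log_2 i}+1\bigr)\le j\bigl(\log_2 j+1\bigr).
\]
For $G^v$, the description before Lemma~\ref{lemma__cost_building_subtrees} says that $\operatorname{dist}_{G^v}(u_i,v)=k$ for $3\cdot 2^{k-1}-2\le i\le 3\cdot 2^k-3$. Equivalently, $\operatorname{dist}_{G^v}(u_i,v)\ge \log_2\frac{i+3}{3}\ge \log_2 i-\log_2 3$.

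So I will use the lower bound
\[
\cost(G^v)(j)\ge \sum_{i=\lceil\sqrt{j}\,\rceil}^{j}\bigl(\log_2 i-\log_2 3\bigr)\ge \bigl(j-\sqrt{j}-1\bigr)\Bigl(\tfrac12\log_2 j-\log_2 3\Bigr).
\]
The $\tfrac12$ factor is too lossy for a ratio tending to $1$, so I will refine this step. I will restrict to $i\ge j/\log_2 j$ instead. On that range $\log_2 i\ge \log_2 j-\log_2\log_2 j$, and the number of such indices is $j(1-1/\log_2 j)-1$. This gives
\[
\cost(G^v)(j)\ge \bigl(j-\tfrac{j}{\log_2 j}-1\bigr)\bigl(\log_2 j-\log_2\log_2 j-\log_2 3\bigr)=j\log_2 j\,\bigl(1-o(1)\bigr).
\]

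Dividing the two bounds,
\[
\frac{\cost(S^v_+)(j)}{\cost(G^v)(j)}\le \frac{j\log_2 j\,\bigl(1+1/\log_2 j\bigr)}{j\log_2 j\,\bigl(1-o(1)\bigr)}\longrightarrow 1.
\]
The ratio is also at least $1$, because $G^v$ minimizes the sum of distances over all binary trees, as argued in the text; alternatively, this follows by directly comparing the indices. This proves the limit.

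The only delicate point is the lower bound on $\cost(G^v)(j)$. One must choose a cutoff such as $j/\log_2 j$ that discards only $o(j)$ indices while losing only $o(\log j)$ per remaining term. Everything else is routine.
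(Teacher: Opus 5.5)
Your argument is correct, but it is organized differently from the paper's proof. The paper compares the two sums index by index. From the level formulas, $0\le \operatorname{dist}_{S^v_+}(u_i,v)-\operatorname{dist}_{G^v}(u_i,v)\le 2$, so $\cost(S^v_+)(j)-\cost(G^v)(j)\le 2j$. It then imports the bound $\OPT(K_{1,n})=\Omega(n\log n)$ from Kellerhals et al.\ to get $\cost(G^v)(j)=\Omega(j\log j)$, which gives a ratio of $1+O(1/\log j)$. You instead evaluate the numerator and the denominator separately and show that both are $(1+o(1))\,j\log_2 j$. Your cutoff at $j/\log_2 j$ in effect reproves the star lower bound directly from the structure of $G^v$, with the sharp leading constant. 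Your route is self-contained and gives the exact asymptotics of both quantities. The paper's route is shorter and has a better rate than your $1+O(\log\log j/\log j)$. Your discarded $\sqrt{j}$ estimate was in fact enough: it already gives $\cost(G^v)(j)=\Omega(j\log j)$. Your own formulas also give the additive comparison $\floor{\log_2 i}+1-\lceil\log_2\frac{i+3}{3}\rceil\le 2$. Combining these two facts would have finished the proof immediately, in exactly the paper's style.
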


\begin{proof}
     Consider the first $j$ children $u_1, \ldots, u_j$ of $v$.
     Note that $0\le \operatorname{dist}_{S^v_+}(v,u_i)- \operatorname{dist}_{G^v}(v,u_i) \leq 2$ for every $i =1,\ldots, j$. Hence $0 \le  \cost(S^v_+)(j) - \cost(G^v)(j) \le 2j  $.
According to \cite{paper}, the optimal value for a star graph with $n+1$ vertices satisfies $\OPT(K_{1,n}) = \Omega(n \log(n))$, implying $\cost(G^v)(j) = \Omega(j \log(j))$. Hence, 
     $$ \frac{\cost(S^v_+)(j)}{\cost(G^v)(j)} \le  1+\frac{2j}{\cost(G^v)(j)} \xrightarrow{j \rightarrow \infty} 1.$$
\end{proof}

\begin{lemma}\label{lemma_large_degree_to_one}
    Assume that every internal vertex in $T$ has degree at least $\Delta + 1$. Then 
    \[
    \limsup_{\Delta\to\infty} \frac{\sum_{v \in V(T)} \operatorname{cost}(S^v_+)}{\OPT(T)} \le  1.
    \]
\end{lemma}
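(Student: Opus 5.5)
The plan is to compare, vertex by vertex, the cost of the star subtree $S^v_+$ with the lower bound given by $G^v$, and then sum over all vertices, exactly as in Corollary~\ref{corollary_bound_for_arranging_children}.

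First, as in the proof of Corollary~\ref{corollary_bound_for_arranging_children}, I bound the optimum locally. The edge set of $T$ is partitioned into the sets $\{vu : u \in N^+_T(v)\}$ for $v \in V(T)$. For any binary tree $G$ with $V(G)=V(T)$, the contribution of the set belonging to $v$ is at least $\operatorname{cost}(G^v)(d^+_T(v))$. Hence
\[
\OPT(T) \ge \sum_{v\in V(T),\, d^+_T(v)\ge 1} \operatorname{cost}(G^v)(d^+_T(v)).
\]
Leaves contribute nothing to either side. Under the hypothesis, every vertex with $d^+_T(v)\ge 1$ has $d^+_T(v)\ge \Delta$. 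The root is the only exception: it has degree at least $\Delta+1$ but no parent, so this again gives at least $\Delta$ children.

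Next, define
\[
\rho(\Delta) \coloneqq \sup_{j\ge \Delta} \frac{\operatorname{cost}(S^v_+)(j)}{\operatorname{cost}(G^v)(j)}.
\]
This quantity does not depend on $v$, since both costs depend only on $j$. By Lemma~\ref{lemmaToOne}, the ratio tends to $1$ as $j\to\infty$. Since $\rho(\Delta)$ is a supremum over a tail, $\rho(\Delta)\to 1$ as $\Delta\to\infty$. Then, summing over the internal vertices,
\[
\sum_{v} \operatorname{cost}(S^v_+) = \sum_v \operatorname{cost}(S^v_+)(d^+_T(v)) \le \rho(\Delta) \sum_v \operatorname{cost}(G^v)(d^+_T(v)) \le \rho(\Delta)\,\OPT(T).
\]
The bound holds uniformly over all trees $T$ satisfying the degree hypothesis, so the $\limsup$ is at most $\lim_{\Delta\to\infty}\rho(\Delta) = 1$.

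The only delicate step is making the convergence uniform, that is, interpreting the $\limsup$ as taken over all $T$ with minimum internal degree at least $\Delta+1$. The tail supremum $\rho(\Delta)$ handles this directly. It can also be made explicit: the proof of Lemma~\ref{lemmaToOne} gives
\[
\rho(\Delta)\le 1+\sup_{j\ge\Delta} \frac{2j}{\operatorname{cost}(G^v)(j)} = 1+O\!\left(\frac{1}{\log \Delta}\right).
\]
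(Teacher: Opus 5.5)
Your proposal is correct and is essentially the paper's argument. The paper also combines the local lower bound $\OPT(T) \ge \sum_{v \in V(T)} \operatorname{cost}(G^v)(d^+_T(v))$ with Lemma~\ref{lemmaToOne} and simply says the claim ``follows immediately''; your tail supremum $\rho(\Delta)$ makes explicit the uniformity over all trees satisfying the degree hypothesis, which the paper leaves implicit.
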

\begin{proof}
    Follows immediately from Lemma~\ref{lemmaToOne} and the fact that according to the definition in Lemma~\ref{lemma__cost_building_subtrees},
    \[
    \OPT(T) \geq \sum_{v \in V(T)} \operatorname{cost}(G^v)(d^+_T(v)).
    \]
\end{proof}

\noindent We now take a look at the additional costs incurred when adding the subtrees $S^{u_i}-u_i$ in line \ref{algorithm_8over3_connecting_subtrees} of Algorithm~\ref{algorithm_trees_to_binary_trees}. By Lemma \ref{lemma_adding_subtree_cost} we know that the total increase of the distance between neighboring vertices in $T$ is at most $n(T)$. The following Theorem shows that relative to the value of the optimal solution these costs converge to $0$ as $\Delta \rightarrow \infty$. 
\begin{lemma} \label{lemma_large_degree_to_zero}
    Assume that every internal vertex of $T$ has degree at least $\Delta + 1$. Then
    \[
    \frac{n(T)}{\OPT(T)} \xrightarrow{\Delta \rightarrow \infty} 0.
    \]
\end{lemma}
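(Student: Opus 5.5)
We bound $\OPT(T)$ from below by the star lower bounds at the internal vertices and $n(T)$ from above by the number of edges charged to them. Let $I \subseteq V(T)$ denote the set of internal vertices of $T$ (rooted at $r$), i.e.\ those with $d^+_T(v)\ge 1$. By assumption, every $v\in I$ has $d^+_T(v)\ge \Delta$. Every vertex except $r$ is the child of exactly one internal vertex, so
\[
n(T) = 1 + \sum_{v\in I} d^+_T(v).
\]

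For the lower bound we use the estimate already employed in the proof of Lemma~\ref{lemma_large_degree_to_one}:
\[
\OPT(T)\ \ge\ \sum_{v\in V(T)} \operatorname{cost}(G^v)\bigl(d^+_T(v)\bigr).
\]
We need a concrete lower bound on $\operatorname{cost}(G^v)(j)$ of the form $\operatorname{cost}(G^v)(j)\ge j\cdot g(j)$, where $g(j)\to\infty$ is nondecreasing. One option is to invoke $\OPT(K_{1,n})=\Omega(n\log n)$ from \cite{paper}, exactly as in Lemma~\ref{lemmaToOne}. Alternatively, we can argue directly. At most $3\cdot 2^{k}-3$ children lie within distance $k$ of $v$ in $G^v$. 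Hence at least $j-(3\cdot 2^{k}-3)$ of the first $j$ children have distance at least $k+1$. Choosing $k=\lfloor \log_2(j/6)\rfloor$ shows that at least half of them have distance at least $\log_2(j/6)$. This gives $\operatorname{cost}(G^v)(j)\ge \frac{j}{2}\log_2(j/6)$ for $j\ge 6$.

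With $g$ in hand, we sum over $I$:
\[
\OPT(T)\ \ge\ \sum_{v\in I} d^+_T(v)\, g\bigl(d^+_T(v)\bigr)\ \ge\ g(\Delta)\sum_{v\in I} d^+_T(v)\ =\ g(\Delta)\,\bigl(n(T)-1\bigr).
\]
If $T$ has at least one edge, then $n(T)-1\ge n(T)/2$, and therefore
\[
\frac{n(T)}{\OPT(T)}\le \frac{2}{g(\Delta)}\xrightarrow{\Delta\to\infty}0.
\]
This bound is uniform in $T$, which is what the statement intends. The single-vertex tree is trivial: it has no internal vertices and $\OPT=0$, so we exclude it, or interpret the claim as being over trees with at least one edge.

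The main care point is the per-vertex lower bound. We must use a form that is monotone, so that $d^+_T(v)\ge\Delta$ translates into a factor of at least $g(\Delta)$. We must also make sure that the constants hidden in $\Omega(n\log n)$ do not depend on $T$. Deriving the explicit counting bound above avoids relying on the hidden constant, so we will use it.
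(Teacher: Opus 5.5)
Your proposal is correct and is essentially the paper's argument. The paper also lower-bounds $\OPT(T)$ by the star costs at the internal vertices, each contributing order $\log\Delta$ per child, and sums over children to get $\OPT(T)\ge\Omega(\log\Delta)\,n(T)$. The one difference is that you replace the cited bound $\OPT(K_{1,n})=\Omega(n\log n)$ by the self-contained count $\operatorname{cost}(G^v)(j)\ge\frac{j}{2}\log_2(j/6)$ for $j\ge 6$, which makes explicit that the constant is uniform in $T$.
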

\begin{proof}
According to \cite{paper}, the optimal value for a star graph with $n+1$ vertices satisfies $\OPT(K_{1,n}) = \Omega(n \log(n))$. Hence, we can bound $\OPT(T) \geq \Omega(\log(\Delta)) n(T)$ and the claim follows immediately.
\end{proof}
\noindent We conclude our claim from the beginning of this section in the following corollary. 
\begin{corollary}
    Assume that every internal vertex of $T$ has degree at least $\Delta + 1$. Let $S$ be the tree created by Algorithm~\ref{algorithm_trees_to_binary_trees} with input $T$. Then
    \[
    \frac{\operatorname{cost}_T(S)}{\OPT(T)} \xrightarrow{\Delta \rightarrow \infty} 1.
    \]
\end{corollary}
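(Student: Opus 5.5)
The plan is to split $\operatorname{cost}_T(S)$ exactly as in the worst-case analysis of Section~\ref{section_worst_case_analysis}. Each edge $vu$ of $T$, with $u$ a child of $v$, contributes $\operatorname{dist}_S(v,u)$. Since $S^{u}$ is never altered internally except through the reattachment step at the parent of $u$, the distance $\operatorname{dist}_S(v,u)$ equals its distance in $S^v_+$ plus the extra length caused when $S^u-u$ is reattached to a leaf of $(S^v_+)_u$. Hence
\[
\operatorname{cost}_T(S) \le \sum_{v\in V(T)} \operatorname{cost}(S^v_+) + \sum_{v\in V(T)} \sum_{i=1}^{d^+_T(v)} \sum_{w\in N^+_T(u_i)} \bigl(\operatorname{dist}_{S^v}(u_i,w)-\operatorname{dist}_{S^{u_i}}(u_i,w)\bigr).
\]
By Lemma~\ref{lemma_adding_subtree_cost}, summed over all $v$, the second term is at most $\sum_{v}\sum_i d^+_T(u_i)\le n(T)$. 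Dividing by $\OPT(T)$ gives
\[
\frac{\operatorname{cost}_T(S)}{\OPT(T)} \le \frac{\sum_{v} \operatorname{cost}(S^v_+)}{\OPT(T)} + \frac{n(T)}{\OPT(T)}.
\]

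Now let $\Delta\to\infty$. By Lemma~\ref{lemma_large_degree_to_one}, the $\limsup$ of the first summand is at most $1$. By Lemma~\ref{lemma_large_degree_to_zero}, the second summand tends to $0$. Together these give $\limsup_{\Delta\to\infty} \operatorname{cost}_T(S)/\OPT(T)\le 1$. For the lower bound, $S$ is a feasible solution, so $\operatorname{cost}_T(S)\ge \OPT(T)$ and the ratio is always at least $1$. Combining the two bounds yields convergence to $1$.

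The main point needing care is uniformity. Both lemmas must hold uniformly over all trees $T$ whose internal vertices have degree at least $\Delta+1$, so that the limit is meaningful as a statement about the worst case for each $\Delta$.

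For the first summand this follows from the proof of Lemma~\ref{lemmaToOne}. We have $\operatorname{cost}(S^v_+)(j)\le (1+\delta_j)\operatorname{cost}(G^v)(j)$ with $\delta_j=2j/\operatorname{cost}(G^v)(j)$, and $\delta_j$ is decreasing in $j$ up to bounded fluctuations since $\operatorname{cost}(G^v)(j)=\Omega(j\log j)$. Every vertex $v$ with children has $d^+_T(v)\ge\Delta$, so summing gives
\[
\sum_v \operatorname{cost}(S^v_+)\le \Bigl(1+\sup_{j\ge \Delta}\delta_j\Bigr)\sum_v \operatorname{cost}(G^v)(d^+_T(v))\le \Bigl(1+\sup_{j\ge\Delta}\delta_j\Bigr)\OPT(T).
\]
For the second summand, the proof of Lemma~\ref{lemma_large_degree_to_zero} supplies the uniform bound $\OPT(T)\ge \Omega(\log\Delta)\, n(T)$.

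The edge case where $T$ is a single vertex is trivial.
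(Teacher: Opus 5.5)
Your proof is correct and follows the paper's argument: the same split of $\operatorname{cost}_T(S)$ into $\sum_v \operatorname{cost}(S^v_+)$ plus the reattachment overhead bounded by $n(T)$ via Lemma~\ref{lemma_adding_subtree_cost}, then Lemmas~\ref{lemma_large_degree_to_one} and~\ref{lemma_large_degree_to_zero} for the $\limsup$, and feasibility for the lower bound; your uniformity paragraph makes explicit what the paper leaves implicit. One wording slip: the extra length on an edge $vu$ arises when $S^v-v$ is reattached while the parent of $v$ is processed, since reattaching $S^u-u$ only affects the edges from $u$ to its own children; your displayed inequality already indexes this correctly, so the argument is unaffected.
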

\begin{proof}
    By Lemmas \ref{lemma_adding_subtree_cost}, \ref{lemma_large_degree_to_one}, and \ref{lemma_large_degree_to_zero}, we can bound 
    \[
    \limsup_{\Delta\to\infty} \frac{\operatorname{cost}_T(S)}{\OPT(T)} \leq \limsup_{\Delta\to\infty} \frac{\sum_{v \in V(T)} \operatorname{cost}(S^v_+) + n(T) }{\OPT(T)} \le 1.
    \]
    Now $\operatorname{cost}_T(S) \ge \OPT(T) $ implies the result.
\end{proof}

\section{Analysis of the PTAS}

In this section, we show that the algorithm from Section 5 achieves the
$(1+\epsilon)$-approximation guarantee, as stated in Theorem~\ref{thmptas}.
To simplify notation, we allow constant multiples of $\epsilon$, such as
$(1+4\epsilon)\OPT$. These factors can be compensated for by scaling appropriately.\\
As observed by \cite{paper}, for any vertex $v$ of $T$ and any solution $S$ of $T$,
$$ \sum\limits_{u\in N_T(v)} \operatorname{dist}_{S}(u,v) \in \Omega\left(d_T(v)\log d_T(v)\right).$$
Hence there exist constants $\alpha > 0$ and $d_0 \in \mathbb{N}$ such that for every tree $T$ and every vertex $v$ of $T$ with degree at least $d_0$, the combined costs of the edges adjacent to $v$ in any solution $S$ are at least 
$\alpha d_T(v)\log_2 d_T(v)$, that is
$$ \sum\limits_{u\in N_T(v)} \operatorname{dist}_{S}(u,v) \ge \alpha d_T(v)\log_2 d_T(v).$$

We fix $\Delta$ such that $\Delta> d_0$ and
$d < \frac{1}{2} \epsilon \alpha d\log_2 d$ for all $d\ge \Delta$.\\
Lemma \ref{lemmaToOne} implies the existence of $D$ such that for every $\ell \ge D$, the following is true:
Let $S'$ be the binary tree with vertices $v,u_1,\ldots, u_\ell$, such that $u_1$ is the only child of $v$ and $u_{2i}, u_{2i+1}$ are the children of $u_i$ (if they exist). Now
$$\sum\limits_{i=1}^\ell \operatorname{dist}_{S'}(v,u_i) \le (1+\epsilon) \OPT(K_{1,\ell}) .$$
We fix such a constant $D$.\\
Let $p := \left\lceil \frac{2}{\epsilon} \right\rceil + 1$.
The parameter $\C$ is chosen large enough to fulfill the following properties.
\begin{align*}
3 &\le   \epsilon \lceil \log_2(\C) \rceil\\
D&<\epsilon (\C-1)\\
\Delta &< \epsilon (\C-1)\\
2\lceil \log_2(\C) \rceil +1 &\le \epsilon (\C-1)\\
6(2^{p-1}-1)+6(2^{p-1}-1)^2&\le \epsilon (\C-1) \\
\Delta \left(2^{p+1}-1\right)&< \epsilon (\C-1)
\end{align*}

The parameter $k$ is chosen large enough such that $k\ge D$ and 
\begin{align*}
c^2K+\C K &\le \epsilon \alpha K \log_2 K \text{ for all } K\ge k.
\end{align*}

The following two inequalities will be used several times during the analysis.\\
$\OPT\left(K_{1,\ell} \right)$ is the cost of an optimal solution of a star with $\ell+1$ vertices. 
For a tree $T$ and a solution $S$, it clearly holds that
$$ \sum\limits_{u\in N_T(v)} \operatorname{dist}_{S}(u,v) \ge \OPT\left(K_{1,d_T(v)} \right). $$
Summing over all vertices implies 
%
$$\sum\limits_{ v \in V(T)}\OPT\left(K_{1,d^+_{T}(v)}\right)\le \sum\limits_{ v \in V(T)}\OPT\left(K_{1,d_{T}(v)}\right) \le 2\OPT(T)$$
for a rooted tree $T$.

\subsection{Analysis of Algorithm 3}

Let $C$ be a component of $T'$ with root $v$ such that $d_C(v)\le k$. Forcing the vertex $v$ to have degree 1 in the solution $S^C$ changes the value of the optimal solution only slightly, since $C$ has by choice many vertices. This is captured by the following lemma.

\begin{lemma}\label{lemmaA1} Let $C$ be a component of $T'$.
    If $d_v(C)\le k$, then
     $$\operatorname{cost}(S^C)\le (1+2\epsilon)\OPT(C).$$
\end{lemma}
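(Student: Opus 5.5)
If $C=R$, the brute-force solution $S^R$ is optimal and there is nothing to prove. So assume $C\neq R$. Then $n(C)>\C$, and $S^C$ is optimal among solutions in which the root $v$ has degree $1$. It therefore suffices to exhibit \emph{one} feasible solution $S$ of $C$ with $d_S(v)=1$ and $\cost(S)\le(1+2\epsilon)\OPT(C)$.

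\textbf{Construction.} Start from an optimal solution $S^*$ of $C$; we may assume every Steiner vertex has degree $3$. We modify it locally at $v$. Let $x_1,\dots,x_m$ with $m\le 3$ be the neighbors of $v$ in $S^*$. Remove $v$ and reinsert it as a pendant leaf. Concretely:
\begin{itemize}
\item Replace $v$ by a new Steiner vertex $z$ adjacent to $x_1,\dots,x_m$. If $m=3$, $z$ has degree $3$ and we need one more slot. In that case, instead keep $z$ adjacent to $x_1,x_2$, subdivide the edge $zx_3$ by a new Steiner vertex $z'$, and hang $v$ as a leaf on $z'$.
\item If $m\le 2$, hang $v$ directly as a leaf on $z$.
\end{itemize}
The result $S$ is binary, contains $V(C)$, and has $d_S(v)=1$.

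\textbf{Cost comparison.} Distances between vertices other than $v$ grow by at most $1$, and only for paths through the subdivided edge $zx_3$. Distances from $v$ to any vertex grow by at most $2$. Hence
\[
\cost(S)\le \OPT(C)+2d_C(v)+n(C)-1\le \OPT(C)+2k+n(C).
\]

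\textbf{Bounding the extra term.} The remaining task is to show $2k+n(C)\le 2\epsilon\,\OPT(C)$. The trivial bound $\OPT(C)\ge m(C)=n(C)-1$ is far too weak, so this is where the main work lies. The slack must come from the structure of $C$: its root has degree at most $k$ while $n(C)>\C$, and by the cutting rule every proper subtree $C_u$ has $n(C_u)\le \C$.

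\textbf{Refined modification.} To get the needed bound, choose which edge to subdivide carefully, and only pay $+1$ on the edges whose $S^*$-path crosses it. The plan:
\begin{itemize}
\item Subdivide an edge incident to $v$ (or to $z$) lying on the side of $S^*-v$ that carries the fewest $T$-edge paths through $v$.
\item Even better, avoid subdividing altogether: if $m\le 2$, then $v$ simply becomes a leaf hanging off $z$, with only $+1$ on the $d_C(v)\le k$ edges at $v$.
\item If $m=3$, use the remark after Lemma~\ref{marble}. Shift edges along a path from $x_3$ down to a nearby vertex of out-degree at most $1$, so that only edges near that short path lengthen.
\end{itemize}
With this, the extra cost is at most $2k$ plus roughly $O(\log \C)$ times the number of $T$-edges crossing a short path. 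That quantity should be bounded by $\epsilon(\C-1)\le\epsilon\,\OPT(C)$ using the parameter inequalities on $\C$. This is the main obstacle I expect; if the shifting argument is too lossy, I would fall back on a charging argument against $\OPT(C)\ge n(C)-1>\C-1$ combined with $2k\le\epsilon(\C-1)$, after adjusting the constants in the choice of $\C$.
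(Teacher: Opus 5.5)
\textbf{There is a genuine gap: your construction is fine, but you never bound the additive loss it causes, and the bound you reduce to is false.}

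Your modified tree is valid. You replace $v$ by a Steiner vertex $z$, and when $m=3$ you subdivide $zx_3$ by $z'$ and hang $v$ on $z'$. The result is binary with $d_S(v)=1$; distances from $v$ grow by at most $2$ and all other distances by at most $1$.

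The problems are in controlling the loss:
\begin{itemize}
\item The inequality you reduce to, $2k+n(C)\le 2\epsilon\,\OPT(C)$, is false in general. A component may be a path on $c+1$ vertices rooted at an end, and then $\OPT(C)=n(C)-1$.
\item The term $2k$ cannot be rescued by ``adjusting the constants''. The parameter $k$ is chosen \emph{after} $c$ and must satisfy $c^2K+cK\le\epsilon\alpha K\log_2 K$. This forces $k\ge 2^{(c^2+c)/(\epsilon\alpha)}\gg c$. Demanding $2k\le\epsilon(c-1)$ is therefore circular and would break Lemma~\ref{lemmaA2}.
\item Choosing the side of $S^*-v$ with the fewest crossing $T$-edges gives no a priori control on that number.
\item In the edge-shifting variant, nothing guarantees that the nearest vertex of out-degree at most $1$ below $x_3$ lies within $O(\log c)$. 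You also give no bound on how many $T$-edges change length.
\end{itemize}
Your sentence ``that quantity should be bounded by $\epsilon(c-1)$'' is exactly what needs proof.

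\textbf{The missing idea} is to split the $T$-edges whose distance increases according to their length in $S^*$; this is the paper's argument. Let $p=\lceil 2/\epsilon\rceil+1$. Every such edge $xy$ has $v$ on its $S^*$-path, possibly as an endpoint.
\begin{itemize}
\item Long edges: if $\operatorname{dist}_{S^*}(x,y)\ge p$, the increase of at most $2$ is at most $\epsilon\operatorname{dist}_{S^*}(x,y)$. In total these edges contribute at most $\epsilon\OPT(C)$.
\item Short edges: if $\operatorname{dist}_{S^*}(x,y)<p$, both endpoints lie within distance less than $p$ of $v$. Since $S^*$ is binary, each of the at most three components of $S^*-v$ contains at most $2^{p-1}-1$ such vertices. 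Hence there are at most $3(2^{p-1}-1)+3(2^{p-1}-1)^2$ short edges.
\end{itemize}
This count depends only on $\epsilon$, not on $k$, $d_C(v)$ or $n(C)$. By the choice of $c$, twice it is at most $\epsilon(c-1)$. Because $n(C)>c$, we have $\OPT(C)\ge n(C)-1\ge c$, so this is less than $\epsilon\OPT(C)$.

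Applied to your construction, this gives $\operatorname{cost}(S)\le(1+2\epsilon)\OPT(C)$. The hypothesis $d_C(v)\le k$ is used only to know that $S^C$ is the exact constrained optimum. That is why any argument relying on the size of $k$ cannot succeed.
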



\begin{proof}
Let $S^{*C}$ be an optimal solution of the instance $C$ (without the degree-condition on $v$). 
If $v$ has degree 1 in $S^{*C}$, the claim follows. Now, let $v$ have degree at least 2.

Assume $v$ has degree 3 in $S^{*C}$. Let $v_1,v_2,v_3$ be the neighbors of $v$. We create a new solution $S'$ from $S^{*C}$ as follows: 
We remove the edges $vv_1,vv_2,vv_3$, add two new Steiner vertices $x_1,x_2 $, and add the edges $vx_1$, $v_1x_1$, $x_1v_2$, $x_1x_2$, and $x_2v_3$.
This increases the distances by at most 2.\\

The idea of the proof is the following: 
given $xy\in E(T)$ with $x,y\in V(C)$, if the distance between $x,y$ in $S^{*C}$ is big (at least $p$), then increasing the distance by 2 changes little. On the other hand, there are only finitely many such $xy$ whose distance increases and whose distance is smaller $p$.\\

Let $V_1,V_2,V_3$ be the components of $S^{*C}$ which contain $v_1,v_2,v_3$, respectively.
Note that for every edge $xy\in E(T)$ with $x,y\in V(C)$, the distance between $x$ and $y$ in $S^{*C}$ and in $S'$ is equal if $x,y$ are in the same component of $S^{*C}-v$.
By the choice of $p$, if $\operatorname{dist}_{S^{*C}}(x,y)\ge p$ for some $xy\in E(T)$ with $x,y\in V(C)$, then $\operatorname{dist}_{S'}(x,y)\le \operatorname{dist}_{S^{*C}}(x,y)+2\le (1+\epsilon) \operatorname{dist}_{S^{*C}}(x,y) $.
We bound the number of edges $x,y\in E(T)$ with $xy\in V(C)$ such that $\operatorname{dist}_{S^{*C}}(x,y)<p $ and $x,y$ are not in the same component of $S^{*C}-v$, that is $$E'=\{ xy\in E(T): x,y\in V(C), \operatorname{dist}_{S^{*C}}(x,y) <p, |\{x,y\}\cap V_i|\le 1 \text{ for all }i=1,2,3\}. $$

For every $i\in [3]$, there are at most $2^{p-1}-1$ vertices in $V_i$ which have distance smaller $p$ to $v$ in $S^{*C}$. Applying this observation on distances between $v$ and $V_i$ and on distances between $V_i$ and $V_j$ ($j\not=i$), implies 
$$ |E'|\le 3(2^{p-1}-1)+3(2^{p-1}-1)^2. $$

By the choice of $\C$, $6(2^{p-1}-1)+6(2^{p-1}-1)^2\le \epsilon (\C-1) $. Now $\operatorname{cost}(S^{*C})
 \ge n(S^{*C})-1 \ge \C-1$ implies $2|E'|\le \epsilon \operatorname{cost}(S^{*C})$.


\begin{align*}
    \operatorname{cost}(S')&=\sum\limits_{xy\in E'} \operatorname{dist}_{S'}(x,y)+\sum\limits_{xy\in E(T)\setminus E': x,y\in V(C)} \operatorname{dist}_{S'}(x,y)\\
    &\le \sum\limits_{xy\in E'} \left(\operatorname{dist}_{S^{*C}}(x,y)+2\right)+\sum\limits_{xy\in E(T)\setminus E': x,y\in V(C)} \left( \operatorname{dist}_{S^{*C}}(x,y)+2\right)\\
    &\le 2\cdot |E'|+ \sum\limits_{xy\in E'} \operatorname{dist}_{S^{*C}}(x,y)+\sum\limits_{xy\in E(T)\setminus E': x,y\in V(C)} \left(1+\epsilon\right) \operatorname{dist}_{S^{*C}}(x,y)\\
    &\le (1+2\epsilon) \operatorname{cost}(S^{*C})
\end{align*}

If $v$ has degree 2 in $S^{*C}$, let $v_1,v_2$ be the neighbors of $v$. Remove the edges $vv_1,vv_2$, add a new Steiner vertex $x_1$, and add the edges $vx_1$, $x_1v_1$, and $x_1v_2$. The bounds from the first case stay true.
\end{proof}

\begin{lemma}\label{lemmaA2}Let $C$ be a component of $T'$.
If $d_C(v)>k$, then
$$\operatorname{cost}(S^C) \le (1+2\epsilon)\OPT(C). $$
    
\end{lemma}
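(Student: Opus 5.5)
We bound the cost of $S^C$ produced in the case $d_C(v)>k$ of Algorithm 3 by splitting the edges of $C$ into two groups. The first group is the edges $vu_i$ incident to the root. The second is the edges inside the components $C_i$ of $C-v$. We then compare both against the lower bound $\OPT(C)\ge \OPT(K_{1,\ell})$, where $\ell=d_C(v)>k\ge D$. This lower bound holds because in any solution the edges at $v$ alone already cost at least $\OPT(K_{1,d_C(v)})$.

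\emph{Edges at the root.} In $S^C$ the vertices $u_1,\ldots,u_\ell$ form exactly the tree $S'$ from the definition of $D$: $u_1$ is the only child of $v$, and the children of $u_i$ are $u_{2i},u_{2i+1}$. Attaching the subtrees $S^{C_i}-u_i$ below the $u_i$'s does not change distances among $v,u_1,\ldots,u_\ell$. Since $\ell\ge D$, the choice of $D$ gives
$$\sum_{i=1}^{\ell}\operatorname{dist}_{S^C}(v,u_i)\le (1+\epsilon)\OPT(K_{1,\ell})\le(1+\epsilon)\OPT(C).$$

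\emph{Edges inside the $C_i$.} Every component other than $R$ has more than $\C$ vertices. Moreover, Algorithm 2 chose $v$ as a deepest vertex with $n(C_v)>\C$, so each child $u_i$ satisfies $n(C_i)\le \C$; this also holds for $R$. An edge $xy$ of $C_i$ has $\operatorname{dist}_{S^{C_i}}(x,y)\le n(C_i)-1<\C$, and $C_i$ has fewer than $\C$ edges, so $\operatorname{cost}(S^{C_i})\le \C^2$ (this is the bound mentioned in the remark after Algorithm 3).

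The reattachment step adds distance only between $u_i$ and its neighbours in $C_i$. Each $w_i$ is attached at a leaf of the complete binary subtree rooted at $u_i$, which has depth at most $\log_2 \ell\le \ell$. Hence each of the at most $\C$ edges at $u_i$ in $C_i$ gains at most $\ell$; more crudely, the total extra cost is at most $\sum_i n(C_i)\cdot(\text{depth})\le \C\ell$. Summing over $i$,
$$\sum_{xy\in E(C-v)}\operatorname{dist}_{S^C}(x,y)\le \C^2\ell+\C\ell.$$
By the choice of $k$, with $K=\ell\ge k$, this is at most $\epsilon\alpha\ell\log_2\ell$. This in turn is at most $\epsilon\OPT(C)$, because $\ell> k\ge D>d_0$ guarantees that the cost at $v$ in any solution is at least $\alpha\ell\log_2\ell$.

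Adding the two groups gives $\operatorname{cost}(S^C)\le(1+2\epsilon)\OPT(C)$. The main obstacle is the reattachment overhead. A naive bound (depth $\log_2\ell$ times $|V(C_i)|$) needs a coefficient linear in $\ell$ times $\C$, which is absorbed by the term $\C K$ in the condition on $k$. If the depth factor $\log_2\ell$ were not absorbable, I would fall back on Lemma~\ref{lemma_adding_subtree_cost} instead. That lemma bounds the increase by $\sum_i d^+_{C}(u_i)\le n(C)\le \C\ell$, which is likewise covered by the same condition on $k$.
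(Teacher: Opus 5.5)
Your split into the edges at $v$ and the edges inside the $C_i$ matches the paper, as do the bound $\operatorname{cost}(S^{C_i})\le c^2$ and the treatment of the root edges via $D$. The gap is in the reattachment overhead.

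Charging every $u_i$ the maximal depth $\floor{\log_2\ell}$ of the complete binary tree gives $\sum_i n(C_i)\floor{\log_2\ell}$. That is of order $c\,\ell\log_2\ell$, not $c\ell$; your cruder version, ``each edge gains at most $\ell$'', gives $c\ell^2$. Contrary to your claim, this is not absorbed by the condition $c^2K+cK\le\epsilon\alpha K\log_2 K$. After dividing by $\ell\log_2\ell$ you would need $c\le\epsilon\alpha$, which is false since $c$ is huge. Enlarging $k$ does not help, because both sides grow like $\ell\log_2\ell$. So your main argument fails, and the fallback is necessary rather than optional.

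The fallback as you state it does not apply directly either. Lemma~\ref{lemma_adding_subtree_cost} relies on the children being sorted by out-degree (line 2 of Algorithm~1), so that vertices high in the complete binary tree have few children. Algorithm~3 takes $u_1,\ldots,u_\ell$ in arbitrary order. Then the bound $\sum_i d^+_C(u_i)$ can be off by a factor of about $\log_2\ell$: take $C_1$ a star centred at $u_1$ with $c-1$ leaves and all other $C_i$ single vertices.

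The correct repair, which is the paper's argument, combines the uniform bound $d_{C_i}(u_i)<c$ with Lemma~\ref{lemma_subtree_adding_cost}. Let $u_i$ have height $h_i$ in the complete binary tree $B$ on $u_1,\ldots,u_\ell$. Its edges into $C_i$ are lengthened by at most $h_i$. Since $\sum_i h_i=\sum_k k\,l_k(B)\le \ell-1$, the overhead is at most $c(\ell-1)$. The key point is that the \emph{sum} of the individual subtree heights is linear in $\ell$, while $\ell$ times the maximal height is not. With this bound the rest of your argument goes through.

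A minor point: $D>d_0$ is not among the paper's parameter choices, so to use $\OPT(K_{1,\ell})\ge\alpha\ell\log_2\ell$ you should also require $k\ge d_0$.
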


\begin{proof}
Since $C_i$ has at most $c$ vertices and $S^C$ contains no Steiner vertices,
$\operatorname{cost}(S^{C_i})\le \C^2$. 
Since every $u_i$ has fewer than $\C$ neighbors in $C_i$,
Lemma~\ref{lemma_subtree_adding_cost} implies

$$\operatorname{cost}(S^C-v) \le \sum\limits_{i=1}^\ell \operatorname{cost}(S^{C_i}) + \C\cdot d_C(v).$$

Since $d_C(v) > k \ge D$, 
$\sum\limits_{i=1}^\ell \operatorname{dist}_{S^C}(v,u_i)\le (1+\epsilon) 
\OPT(K_{1,\ell}).
$
Furthermore,
$$\OPT(C)\ge \OPT(K_{1,\ell}) \ge \alpha d_C(v) \log_2 d_C(v).$$

Together, since $d_C(v)>k$, the choice of $k$ implies
\begin{align*}
\operatorname{cost}(S^C)&\le \C^2\cdot d_C(v)+ \C\cdot d_C(v)+  \sum\limits_{i=1}^\ell \operatorname{dist}_{S_C}(v,u_i)\\
&\le \epsilon \alpha d_C(v) \log_2 d_C(v) + (1+\epsilon) \OPT(K_{1,\ell}) \le (1+2\epsilon) \OPT(C). 
\end{align*}
\end{proof}

\subsection{Analysis of Algorithm 4}

\begin{lemma}\label{lemmaA3}
Let $S'^C$ be the resulting tree when applying Algorithm~\ref{algospace} to $S^C$.
    $$\operatorname{cost}(S'^C)\le (1+4\epsilon) \operatorname{cost}(S^C)$$
\end{lemma}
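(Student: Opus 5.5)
We plan to bound, edge by edge of $T$ inside $C$, how much subdivision by Algorithm~\ref{algospace} can stretch the distance, and to charge this stretch either multiplicatively to long edges or additively to a small quantity dominated by $\operatorname{cost}(S^C)$. Write $L=\lceil \log_2(\C)\rceil$. The only change to $S^C$ is that some edges between a marked vertex and its parent are subdivided, and all marked vertices lie at depths that are multiples of $L$. For $xy\in E(T)$ with $x,y\in V(C)$, the new distance $\operatorname{dist}_{S'^C}(x,y)$ equals $\operatorname{dist}_{S^C}(x,y)$ plus the number of subdivided edges on the $x$--$y$ path in $S^C$. That path goes up from $x$ to the lowest common ancestor and then down to $y$. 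Along each monotone half, a subdivided edge occurs at most once per depth band of length $L$. Hence the number of subdivisions on the path is at most $2\left(\operatorname{dist}_{S^C}(x,y)/L+1\right)$.

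\textbf{Step 1 (long edges).} We aim for a bound of the form
\[
\operatorname{dist}_{S'^C}(x,y)\le \operatorname{dist}_{S^C}(x,y)\left(1+\tfrac{2}{L}\right)+2 .
\]
Since $3\le \epsilon L$, the multiplicative part is at most $(1+\epsilon)\operatorname{dist}_{S^C}(x,y)$. If we argue more carefully, a monotone segment of length $m$ crosses at most $\lceil m/L\rceil$ subdivided edges. Then the two halves contribute at most $\operatorname{dist}/L+2$ in total.

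\textbf{Step 2 (the additive $+2$ per edge).} This additive term is the main obstacle. A naive bound gives $2|E(C)|\le 2\operatorname{cost}(S^C)$, which is far too weak. We will instead count more carefully. An edge of $T$ with $\operatorname{dist}_{S^C}(x,y)=\delta$ gains a subdivision only if its path crosses a depth that is a multiple of $L$. For short paths this happens only when a level boundary lies within the path, so the gain is at most $2\lceil \delta/L\rceil$, and in particular $0$ is possible for many edges.

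A cleaner route is to bound the total number of Steiner vertices added. A subdivision at marked $v$ happens only if every vertex of $S^C_v$ within distance less than $L$ of $v$ has out-degree $2$. Then the complete binary tree of depth $L-1$ below $v$ lies in $S^C$, which gives at least $2^{L}-1\ge \C-1$ vertices. These subtrees are disjoint for distinct subdivided $v$ at the same level, and across levels they are separated by $L$ in depth. So the number of subdivisions is at most about $n(S^C)/(\C-1)$.

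The number of $T$-edges whose path uses a given subdivided edge $vw$ is bounded by the edges leaving $S^C_v$. We will try to charge each such $T$-edge's extra $+1$ to the vertices of that full subtree, which each carry cost at least one unit via their own $T$-edges (there are fewer than half Steiner vertices). This gives an additive total of at most $\epsilon\operatorname{cost}(S^C)$ using $2L+1\le\epsilon(\C-1)$.

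\textbf{Step 3 (combine).} Summing the two steps, the multiplicative part contributes at most $(1+\epsilon)\operatorname{cost}(S^C)$. The additive part contributes at most a constant times $\epsilon\operatorname{cost}(S^C)$. Absorbing constants as the paper permits yields $(1+4\epsilon)\operatorname{cost}(S^C)$.

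If Step 2's charging fails, we fall back on a weaker fact: short edges ($\delta<L$) gain at most $2$ and long edges gain at most a $(1+3/L)$ factor. We would then bound the short edges crossing subdivided edges by the depth-$L$ full subtrees, in the style of the $E'$ count in Lemma~\ref{lemmaA1}.
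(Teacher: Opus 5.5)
Your Step 1 is correct, and so is your bound on the number of subdivisions: each subdivided $v$ sits on a perfect binary tree of depth $L-1$, so there are at most $2(n(C)-1)/(\C-1)$ subdivisions. Both match the paper. The gap is Step 2. You charge the $+1$ of \emph{every} $T$-edge whose $S^C$-path uses a subdivided edge $vw$ to the at least $\C-1$ band vertices below $v$. Nothing bounds the number of such $T$-edges in terms of $\C$. Every vertex deep inside $S^C_v$ with a $T$-neighbour outside $S^C_v$ contributes one, so there can be on the order of $n(S^C_v)$ of them. The inequality $2L+1\le\epsilon(\C-1)$ plays no role here; in the paper it only bounds the detour from $u$ to $f(u)$ when the parts are combined. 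The $+2$ of long edges must be absorbed multiplicatively, not charged to the band.

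Your fallback handles $\delta=\operatorname{dist}_{S^C}(x,y)\ge L$ correctly: the gain is at most $\delta/L+2\le 3\delta/L\le\epsilon\delta$. For $\delta<L$, however, splitting only at $L$ is too coarse. An $E'$-style count of $T$-edges with $\delta<L$ across one subdivided edge gives about $(2^L)^2\approx\C^2$ pairs, far more than the $\epsilon(\C-1)$ per subdivision you can afford. Each such edge may gain $2$, which is not small relative to a bounded $\delta$.

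The missing idea is an intermediate threshold $p=\lceil 2/\epsilon\rceil+1$ that does not depend on $\C$. An edge with $p<\delta<L$ crosses at most two subdivided edges, since each monotone half of its path is shorter than $L$. Its gain is therefore at most $2<\epsilon p<\epsilon\delta$. Only edges with $\delta\le p$ remain, and for these the number crossing a fixed subdivided edge is bounded by a constant depending only on $\epsilon$.

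The paper gets this constant via a degree split. Endpoints $x$ within distance $p$ of $v$ with $d_C(x)\le\Delta$ contribute at most $\Delta(2^{p+1}-1)$ per subdivision. A vertex with $d_C(x)>\Delta$ is charged in at most one iteration, for at most $d_C(x)<\frac12\epsilon\alpha d_C(x)\log_2 d_C(x)$. By the star lower bound these charges sum to at most $\epsilon\OPT(C)$.

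A pair count also works once the threshold is $p$. Both endpoints lie within distance $p-1$ of $v$, resp.\ $w$, on their own side of a tree of maximum degree $3$, and $T$ is simple. This gives at most $(2^p-1)^2$ such edges per subdivision. Multiplying by the number of subdivisions and using the choice of $\C$ yields the required $O(\epsilon)\operatorname{cost}(S^C)$.
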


\begin{proof}
For the readability of the proof, let $S=S^C$ and $S'=S'^C$. Note that $S^C$ can contain Steiner vertices, but $n(S^C)\le 2n(C)-1$. As always, if $x$ is a Steiner vertex of $S^C$, then $x\not\in V(C)$ and we set $d_C(x)=0$.\\
Note that $\{xy\in E(T): x,y\in V(C)\}=E(C)$ since $C$ is a subtree of $T$.
We bound the cost-increase, that is,
$$
\operatorname{cost}(S')-\operatorname{cost}(S)=
\sum\limits_{xy\in E(C)} \left(\operatorname{dist}_{S'}(x,y)-\operatorname{dist}_S(x,y) \right).$$

Therefore we consider a simple charging scheme from the start of the algorithm until the end. 
Each time the distance between the endvertices of an edge 
$xy\in E(C)$
increases throughout the algorithm, let $v$ be the corresponding vertex in that iteration of the algorithm. Potentially after renaming the vertices $x$ and $y$, w.l.o.g. $x \in V(S'_v) $ and $y \not \in V(S'_v)$ and we consider the following cases.
\begin{enumerate}
    \item If $\operatorname{dist}_{S}(x,y)\ge  \lceil \log_2(\C) \rceil$, 
charge the cost-increase (of 1) at the edge $xy$.

    \item If $\operatorname{dist}_{S}(x,y)< \lceil \log_2(\C) \rceil$, consider the following cases.


If $d_C(x)> \Delta$, we charge the cost-increase at the vertex $x$. 

If $d_C(x)\le \Delta$ and $x$ has distance at most $p$ to $v$ in $S'_v$, we charge the cost-increase at $S$.

If $d_C(x)\le \Delta$ and $x$ has distance greater than $p$ to $v$ in $S'_v$, we charge the cost-increase at the edge $xy$.
    
\end{enumerate}

We count the charges. 
Note that we only increase the cost of the solution in an iteration of the algorithm if every vertex in
$S'_v$ with distance less than $\lceil \log_2(\C)\rceil$ to $v$ has out-degree 2, which implies that those vertices form a perfect binary tree and 
there are at least $\C-1$ many vertices with distance less than $\lceil \log_2(\C)\rceil$ to $v$ in $S'_v$. Note that we call a binary tree with $\ell$ vertices perfect if it is complete and $\log_2(\ell+1)\in \mathbb{N}$. \\
We first consider $S$, which only gets charged in the second case.
Consider a vertex $v$, not the root of $S$, whose distance to $a$ is a multiple of $\lceil \log_2(\C) \rceil$ and where the vertices in
$S'_v$ with distance less than $\lceil \log_2(\C)\rceil$ form a perfect binary tree. There are at most $2^{p+1}-1$ vertices which have distance at most $p$ to $v$ in $S'_v$.
There are at most $\frac{n(S)-1}{\C-1}\le \frac{2n(C)-2}{\C-1}$ choices for such a $v$ in $S$.
Together with the degree bound $\Delta$, this implies a charge at $S$ of at most
$$\Delta \left(2^{p+1}-1\right) \frac{2n(C)-2}{\C-1}
<\epsilon (\C-1)\frac{2n(C)-2}{\C-1}=2\epsilon (n(C)-1) \le 2\epsilon \operatorname{cost}(S).
$$

For an edge $xy\in E(T)$, where the vertices $x,y\in V(S)$ have distance at least $\lceil \log_2(\C)\rceil$ in $S$, 
let $i\in \mathbb{N}$ such that $i \lceil \log_2(\C) \rceil  \le dist_{S}(x,y) <  (i+1) \lceil \log_2(\C) \rceil$. Let $q$ be the lowest common ancestor of $x$ and $y$ in $S$. During an iteration of the for-loop, the distance between $x$ and $y$ only increases if $x\in V(S_v)$ and $q\not\in V(S_v)$, or the other way round. There are at most $\left\lfloor \frac{\operatorname{dist_S}(x,q)}{\left\lceil \log_2(\C) \right\rceil} \right\rfloor +1$ (originally) marked vertices $v$ with $x\in V(S_v)$ and $q\not\in V(S_v)$, analogously for $y$. Since $\operatorname{dist_S}(x,q)+\operatorname{dist_S}(y,q)=\operatorname{dist_S}(x,y)$, this implies a charge at $xy$ of at most 
$i+2$.\\

We count the charge at an edge $xy\in E(T)$ with $x,y\in V(C)$ and $p< \operatorname{dist}_S(x,y)<\lceil \log_2(\C) \rceil$. Since $\operatorname{dist}_S(x,y)<\lceil \log_2(\C) \rceil$, throughout the algorithm, there exist at most two marked vertices $v$ such that the distance between $x$ and $y$ increases during the corresponding for-loop.
This implies a charge of at most 2 at $xy$.\\
Edges $xy\in E(T)$ with $x,y\in V(C)$ and $ \operatorname{dist}_S(x,y)\le p$ receive no charge. \\
Since $2<\epsilon p$ and $3< \epsilon \lceil \log_2(\C) \rceil$, the charge at all edges $xy\in E(T)$ with $x,y\in V(T)$ combined is at most
%
\begin{align*}
    &\sum\limits_{\substack{ xy\in E(T): x,y\in V(C) ,\\ p<\operatorname{dist}_S(x,y)<\lceil \log_2(\C) \rceil  }} 2 + \sum\limits_{i=1}^\infty \sum\limits_{\substack{ xy\in E(T): x,y\in V(C) ,\\ i\lceil \log_2(\C) \rceil  \le dist_{S}(x,y) < (i+1)\lceil \log_2(\C) \rceil} }i+2\\[2mm]
&\le \epsilon
\sum\limits_{\substack{xy\in E(T): x,y\in V(C) ,\\ p< dist_{S}(x,y)}}  \operatorname{dist}_S(x,y)
\le \epsilon \operatorname{cost}(S).
\end{align*}

We consider the charge at a vertex $u\in V(C)$ with $d_C(u)>\Delta$.
Note that there is at most one (originally) marked vertex $v$, for which a charge at $u$ can occur in the corresponding for-loop, since only edges $ux\in E(T)$ with $\operatorname{dist}_S(u,x)<\lceil \log_2(\C) \rceil$ potentially induce a charge at $u$. Hence the charge at $u$ is at most $d_C(u)$.\\
The charge at a vertex $u\in V(C)$ with $d_C(u)\le \Delta$ is 0.\\
The charge at all vertices combined is at most 

$$ \sum\limits_{w\in V(C), d_C(w)> \Delta} d_C(w)<\frac{1}{2}\sum\limits_{w\in V(C), d_C(w)> \Delta} \epsilon \alpha d_C(w) \log_2 d_C(w) \le \epsilon \OPT(C) \le \epsilon \operatorname{cost}(S).$$

Combining the bound on the charges of $S$, the edges, and the vertices completes the proof.
\end{proof}

\subsection{Analysis of Algorithm 5}

\begin{lemma}\label{lemmaAA}
Fix an iteration of the while loop and an iteration of the for-loop. Let $S'$ be $S$ at the beginning of Line 5 and let $S''$ be $S$ after Line 10.
$C$, $u$, $u_i$, $d_i$ and $w_i$ are the values of the corresponding variables at this iteration.
$$\operatorname{cost}(S'')\le \operatorname{cost}(S')+\sum\limits_{i=1}^\ell \operatorname{cost}(S^{u_i})
+
    %
    3\epsilon\sum\limits_{i=1}^\ell \operatorname{cost}(S'^{C_i})
    +  \epsilon \sum\limits_{i=1}^\ell \OPT(K_{1,d_i})
    + (1+\epsilon)\OPT(K_{1,\ell})
    $$
    \end{lemma}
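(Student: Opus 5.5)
We split $\operatorname{cost}(S'')-\operatorname{cost}(S')$ according to the edges of $T$ whose endpoints both lie in $V(S'')$ but not both in $V(S')$. There are three kinds.
\begin{enumerate}[label=(\roman*)]
\item Edges inside some $T_{u_i}$, i.e.\ edges counted in $\operatorname{cost}(S^{u_i})$.
\item The $\ell$ edges $uu_i$.
\item No others: the subtrees $S^{u_i}$ are only attached via $f(u)$, and $T_{u_i}$ is joined to the rest of $T$ only through $uu_i$.
\end{enumerate}
Edges already present in $S'$ keep their distances, because we only attach new material at the attachment point of $f(u)$ and delete no edge of $S'$. So it suffices to bound the new distances of type (i) and type (ii) edges separately.

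\textbf{Type (ii).} By construction $\operatorname{dist}_{S''}(u,u_i)=\operatorname{dist}_{S'}(u,f(u))+\operatorname{dist}(f(u),u_i)$, where the second term is the depth of $u_i$ in the tree consisting of $f(u)$ with the complete binary tree on $u_1,\ldots,u_\ell$ hanging from it. This is exactly the tree $S'$ from the choice of $D$, so for $\ell\ge D$ the second summand totals at most $(1+\epsilon)\OPT(K_{1,\ell})$. For $\ell<D$ we need a different argument, and this is the main obstacle, since the constant $D$ is not covered by the statement in the obvious way. I would use that $\operatorname{dist}(u,f(u))\le 2\lceil\log_2 \C\rceil+1\le\epsilon(\C-1)$ by Lemma~\ref{marble} together with the property of Algorithm~\ref{algospace}. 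The extra term $\ell(2\lceil\log_2\C\rceil+1)$ is then charged as $\epsilon(\C-1)$ per $i$, and each $C_i$ has more than $\C$ vertices, so $\operatorname{cost}(S'^{C_i})\ge n(C_i)-1\ge \C-1$. This gives a total of at most $\epsilon\sum_i\operatorname{cost}(S'^{C_i})$. For $\ell<D$, the complete-binary-tree part has cost at most $D\log_2 D\le\epsilon(\C-1)\cdot$const per child, since $D<\epsilon(\C-1)$, and is absorbed the same way into another $\epsilon\sum_i\operatorname{cost}(S'^{C_i})$. If this fails I would fall back on $\OPT(K_{1,\ell})\ge\ell$ and bound the depth of the complete binary tree by $\log_2 D$.

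\textbf{Type (i).} Distances between vertices inside one $S^{u_i}-u_i$ are unchanged. The edges from $u_i$ to its $T$-neighbours in $S^{u_i}$, of which there are $d_i$, get longer by at most $k_i$ when $u_i\in L_{k_i}$ of the complete binary tree. We sort so that $d_1\le\dots\le d_\ell$ and repeat the averaging argument of Lemma~\ref{lemma_adding_subtree_cost}, using Lemma~\ref{lemma_subtree_adding_cost}; the extra cost is at most $\sum_i d_i$. To get this into the $\epsilon$ terms, split on $d_i$.
\begin{itemize}
\item If $d_i\ge\Delta$, then $d_i<\tfrac12\epsilon\alpha d_i\log_2 d_i\le\epsilon\OPT(K_{1,d_i})$.
\item If $d_i<\Delta$, then $d_i<\epsilon(\C-1)\le\epsilon\operatorname{cost}(S'^{C_i})$, using $\Delta<\epsilon(\C-1)$.
\end{itemize}
Adding everything up gives $\sum_i\operatorname{cost}(S^{u_i})$ from the unchanged internal distances, the $\epsilon$ terms from the three charges above (at most $3\epsilon\sum_i\operatorname{cost}(S'^{C_i})$ and $\epsilon\sum_i\OPT(K_{1,d_i})$), and $(1+\epsilon)\OPT(K_{1,\ell})$, which is the claimed bound.
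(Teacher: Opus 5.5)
Your proposal is correct and follows essentially the same route as the paper. In both, $\operatorname{cost}(S')$ is untouched, and the reattachment of the $S^{u_i}-u_i$ costs at most $\sum_i d_i$ by Lemma~\ref{lemma_adding_subtree_cost}; this is split at $\Delta$ into $\epsilon\operatorname{cost}(S'^{C_i})$ or $\epsilon\OPT(K_{1,d_i})$. Each $\operatorname{dist}(u,u_i)$ is split into $\operatorname{dist}(u,f(u))\le 2\lceil\log_2\C\rceil+1$ plus the depth of $u_i$ below $f(u)$, with the case split at $D$.

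For $\ell<D$ you can drop the hedging: each $u_i$ has depth at most $\lfloor\log_2 \ell\rfloor+1\le\ell<D<\epsilon(\C-1)$. This is exactly the paper's bound $\ell D\le\epsilon\sum_i\operatorname{cost}(S'^{C_i})$, with constant $1$. Your fallback via $\OPT(K_{1,\ell})\ge\ell$ would only give a factor of order $\log_2 D$, not $1+\epsilon$.
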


\begin{proof}

    We consider the cost-increase when attaching the $S^{u_i}$'s to $f(u)$.
For some $i\in [\ell]$, the distance between $u_i$ and the vertices in $N^+_T(u_i)\cap V(S^{u_i})$ increases in $S^{u_i}$. Lemma \ref{lemma_adding_subtree_cost} implies additional cost of at most $\sum\limits_{i=1}^\ell d_i$. 


If $d_i\le \Delta$, we compare the additional cost of $d_i$ to the cost of $S'^{C_i}$. If $d_i>\Delta$, we compare the additional cost of $d_i$ to $\OPT(K_{1,d_i})$. Since $\Delta < \epsilon (\C-1)$,
\begin{align*}
    \sum\limits_{i=1}^\ell d_i &= \sum\limits_{i\in [\ell]: d_i\le \Delta} d_i + \sum\limits_{i\in [\ell]: d_i>\Delta} d_i\\
    &< \epsilon \sum\limits_{i\in [\ell]:d_i \le \Delta} \operatorname{cost}(S'^{C_i})+\epsilon\sum\limits_{i\in [\ell]: d_i>\Delta} \alpha d_i \log_2 d_i\\
&\le \epsilon \sum\limits_{i\in [\ell]:d_i \le \Delta} \operatorname{cost}(S'^{C_i})+\epsilon\sum\limits_{i\in [\ell]: d_i>\Delta} \OPT(K_{1,d_i}).
\end{align*}


    The cost of the edges $u_iu$ for $i\in [\ell]$ is 
    $$\sum\limits_{i=1}^\ell \operatorname{dist}_{S''}(u,u_i) =\sum\limits_{i=1}^\ell \left(\operatorname{dist}_{S''}(u_i,f(u))+\operatorname{dist}_{S''}(f(u),u)  \right).
    $$
    
Note that
    $\sum\limits_{i=1}^\ell\operatorname{dist}_{S''}(f(u),u) \le \ell (2\lceil \log_2 \C \rceil +1) \le \epsilon \sum\limits_{i=1}^\ell \operatorname{cost}(S'^{C_i})$, since $2\lceil \log_2 \C \rceil +1 \le \epsilon (\C-1)$.

If $\ell \ge D$, then
    $\sum\limits_{i=1}^\ell \operatorname{dist}(u_i,f(u)) \le (1+\epsilon) \OPT(K_{1,\ell})$.

    If $\ell < D$, then $\sum\limits_{i=1}^\ell \operatorname{dist}(u_i,f(u)) \le \ell D \le \epsilon \sum\limits_{i=1}^\ell \operatorname{cost}(S'^{C_i})$, since $D< \epsilon (\C-1)$.

Together, 
$$\sum\limits_{i=1}^\ell \operatorname{dist}_{S''}(u,u_i) 
\le (1+\epsilon) \OPT(K_{1,\ell})+2\epsilon \sum\limits_{i=1}^\ell \operatorname{cost}(S'^{C_i})
    $$

Since $
\operatorname{cost}(S'') \le \operatorname{cost}(S')+\sum\limits_{i=1}^\ell \operatorname{cost}(S^{u_i})+
\sum\limits_{i=1}^\ell d_i+\sum\limits_{i=1}^\ell \operatorname{dist}_{S''}(u,u_i)
$, the above inequalities imply the result.
\end{proof}

\begin{lemma}\label{lemmaA4}
    Let $\epsilon'\ge \epsilon >0$.
If $\operatorname{cost}(S'^{C})\le (1+\epsilon') \OPT(C) $ for every component $C$ of $T'$, then

       $$\operatorname{cost}(S^r)     
    \le (1+\epsilon'+2\epsilon+3\epsilon(1+\epsilon')) \OPT(T),$$
where $r$ is the root of $T$.
\end{lemma}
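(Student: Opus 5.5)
The plan is to sum Lemma~\ref{lemmaAA} over all iterations of Algorithm~5. Then I will bound each resulting global sum by a multiple of $\OPT(T)$. The bounds come from two facts: the components $C$ of $T'$ partition $V(T)$, and the cut edges are exactly the edges $uu_i$ with $u_i\in N^+_T(u)\setminus N^+_{T'}(u)$.

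\textbf{Telescoping.} Processing components bottom-up, each component $C$ with root $a$ starts from a copy of $S'^C$. Each for-loop iteration at a vertex $u\in V(C)$ adds, via Lemma~\ref{lemmaAA}, the costs $\operatorname{cost}(S^{u_i})$ of the already finished trees of its cut-off children. Inducting on the components in order of decreasing root depth and unrolling the recursion, every component is counted exactly once through its $S'^C$, and each cut edge $uu_i$ is charged once as a child of $u$. This gives
\begin{align*}
\operatorname{cost}(S^r)\le{}& \sum_{C}\operatorname{cost}(S'^C)+3\epsilon\sum_{C\neq R}\operatorname{cost}(S'^C)\\
&+\epsilon\sum_{\text{cut edges }uu_i}\OPT(K_{1,d_i})+(1+\epsilon)\sum_{u}\OPT\bigl(K_{1,\ell(u)}\bigr),
\end{align*}
where $\ell(u)=|N^+_T(u)\setminus N^+_{T'}(u)|$. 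The extra $S'^{C_i}$ terms range over non-root components, each appearing once because each such component hangs at exactly one cut edge.

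\textbf{Bounding each term.} By hypothesis $\operatorname{cost}(S'^C)\le(1+\epsilon')\OPT(C)$. Since the $C$ are vertex-disjoint subtrees of $T$, restricting an optimal solution of $T$ to each part gives $\sum_C\OPT(C)\le\OPT(T)$. The restriction is a minimal subtree spanning $V(C)$, possibly with suppressed degree-2 Steiner nodes, and distances in it do not increase. Moreover the edge sets $E(C)$ are disjoint and do not include cut edges. So the first two terms contribute at most $(1+\epsilon')\OPT(T)+3\epsilon(1+\epsilon')\OPT(T)$, which produces the summand $3\epsilon(1+\epsilon')$.

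For the star terms, I would argue more carefully than via the crude bound $\sum_v\OPT(K_{1,d_T(v)})\le 2\OPT(T)$. The edges counted in $\sum_u\OPT(K_{1,\ell(u)})$ are exactly the cut edges, at $u$'s side. In an optimal solution $S^*$ of $T$, the cost of the cut edges at $u$ is at least $\OPT(K_{1,\ell(u)})$. These cut edges are disjoint from every $E(C)$, so I would strengthen the previous step to $\sum_C\OPT(C)+\sum_u\OPT(K_{1,\ell(u)})\le\OPT(T)$, splitting $\operatorname{cost}(S^*)$ into edges inside components and cut edges. The terms $\OPT(K_{1,d_i})$ count edges from $u_i$ to its children inside $C_i$, each $u_i$ once. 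These are bounded by $\sum_v\OPT(K_{1,d^+_T(v)})\le 2\OPT(T)$, giving the $2\epsilon$.

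\textbf{Main obstacle.} The delicate step is combining the first and last terms without double-counting $\OPT(T)$. The target constant has $\epsilon'$ and $1$ only once, so $(1+\epsilon')\sum_C\OPT(C)+(1+\epsilon)\sum_u\OPT(K_{1,\ell(u)})$ must be at most $(1+\epsilon')\OPT(T)$. This uses $\epsilon\le\epsilon'$ together with the additive split of $\operatorname{cost}(S^*)$ into internal and cut edges. I would also check the edge case $R=\{r\}$ with $f(r)=r$, and confirm that Lemma~\ref{lemmaAA} applies to vertices with $\ell(u)=0$ trivially.
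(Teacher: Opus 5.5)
Your proof is correct and uses the same ingredients as the paper's: Lemma~\ref{lemmaAA}; the split of an optimal solution's cost into edges inside the components and the cut-edge stars at each $u$, with $(1+\epsilon)\OPT(K_{1,\ell})$ absorbed via $\epsilon\le\epsilon'$; and $\sum_v\OPT(K_{1,d^+_T(v)})\le 2\OPT(T)$ for the $d_i$-terms (note that $d_i=d^+_T(u_i)$ counts all children of $u_i$ in $T$, not only those inside $C_i$, which does not affect the bound). The only difference is organizational: the paper carries the bound as an inductive invariant in terms of $\OPT(T[V(S)])$ and does the split locally at each attachment step, whereas you unroll Lemma~\ref{lemmaAA} first and split a single optimal solution of $T$ once.
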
 

\begin{proof} We show by induction that, throughout the algorithm, for $S$ in Line 3 and at the end of Line 10, it holds that


\begin{align}\label{eqcost}
    \operatorname{cost}(S) \le (1+\epsilon') 
\OPT(T[V(S)])
+
\epsilon\sum\limits_{\substack{ v \in V(S) \\ v \text{ root in }T' \\
v \text{ not root of }S}}\OPT\left(K_{1,d^+_{T}(v)}\right) + 3\epsilon \sum\limits_{\substack{C' \text{ component of } T',\\ V(C')\subseteq V(S), C'\not=C}} \operatorname{cost}(S'^{C'}),
\end{align}
where $C$ is the component which was chosen in the corresponding while-loop, that is, $C$ is the component of $T'$ which contains the root of $S$.\\
If $S=S'^{C}$ (Line 3),
then
$ \operatorname{cost}(S) \le (1+\epsilon') \OPT(C)$,
which implies inequality (\ref{eqcost}).\\
Now, fix an iteration of the while-loop and an iteration of the for-loop. Let $S'$ be $S$ at the beginning of Line 5 and let $S''$ be $S$ after Line 10. 
Note that $T[V(S')]$ and $T[V(S'')]$ are indeed trees. $C$, $C_i$, $u$, $u_i$, $d_i$ and $w_i$ are the values of the corresponding variables at this iteration.\\
Let $S^*$ be an optimal solution of $T[V(S'')]$.

By the induction hypothesis, (\ref{eqcost}) is true for $S'$ and all $S^{u_i}$. Let $S^*_0$ be an optimal solution of $T[V(S')]$ and let $S^*_i$ be an optimal solution of $T[V(S^{u_i})]=T_{u_i}$ for $i\in [\ell]$.
Note that $S^*$ is also a solution for $T_{u_i}$ and for $T[V(S')]$.
This implies
$$\OPT(T[V(S')])=\sum\limits_{xy\in E(T): x,y\in V(S')} \operatorname{dist}_{S^*_0}(x,y)
\le \sum\limits_{xy\in E(T): x,y\in V(S')} \operatorname{dist}_{S^*}(x,y)
$$ and
$$\OPT(T_{u_i})=\sum\limits_{xy\in E(T): x,y\in V(S^{u_i})} \operatorname{dist}_{S^*_i}(x,y)
\le \sum\limits_{xy\in E(T): x,y\in V(S^{u_i})} \operatorname{dist}_{S^*}(x,y)
$$

for $i=1,\ldots, \ell$.
Hence, the induction hypothesis implies
\begin{align*}
&\operatorname{cost}(S')+\sum\limits_{i=1}^\ell \operatorname{cost}(S^{u_i})\\
&\le (1+\epsilon')\sum\limits_{\substack{xy\in E(T): x,y\in V(S') \text{ or}\\ x,y\in V(S^{u_i}) \text{ for some }i\in [\ell] }} \operatorname{dist}_{S^*}(x,y)
+\epsilon\sum\limits_{\substack{v\in V(S), \text{ }v \text{ root in }T'\\v \not\in \{a,u_1,\ldots, u_\ell\},\\ }}\OPT\left(K_{1,d^+_{T}(v)}\right) \\
&+ 3\epsilon \sum\limits_{\substack{C' \text{ component of } T',\\ V(C')\subseteq V(S''), C'\not \in \{C,C_1,\ldots, C_\ell\}}} \operatorname{cost}(S'^{C'}), 
\end{align*}

where $a$ is the root of $S'$.
Note that $\left\{xy\in E(T): x,y\in V(S') \text{ or } x,y\in V(S^{u_i}) \text{ for some }i\in [\ell] \right\}$
together with $\{uu_1,\ldots, uu_\ell\}\}$ is exactly $\{xy\in E(T): x,y\in V(S'')\}$.
Hence adding
$$
    3\epsilon\sum\limits_{i=1}^\ell \operatorname{cost}(S'^{C_i})
    +  \epsilon \sum\limits_{i=1}^\ell \OPT(K_{1,d_i})
    + (1+\epsilon)\OPT(K_{1,\ell})
    $$

on both sides, using
$d_i=d^+_T(u_i)$ for $i\in [\ell]$,
$\OPT(K_{1,\ell}) \le \sum\limits_{i\in [\ell]} \operatorname{dist}_{S^*}(u_i,u)$, 
%
and applying Lemma~\ref{lemmaAA} implies (\ref{eqcost}), which completes the induction.\\

Now we show that (\ref{eqcost}) implies the theorem.
Let $S^*$ be an optimal solution of $T$. Since $S^*$ is a solution of $C$,

$$\OPT(C)
\le \sum\limits_{xy\in E(T): x,y\in V(C)} \operatorname{dist}_{S^*}(x,y)
$$
for every component $C$ of $T'$. Hence

$$\sum\limits_{C' \text{ component of } T'} \operatorname{cost}(S'^{C'}) \le \sum\limits_{C' \text{ component of } T'}(1+\epsilon')\OPT(C') \le (1+\epsilon')\OPT(T)$$

Note
$$\sum\limits_{\substack{ v \text{ root in }T', \\
v \text{ not root of }T}}\OPT\left(K_{1,d^+_{T}(v)}\right)
\le \sum\limits_{ v \in V(T)}\OPT\left(K_{1,d^+_{T}(v)}\right)\le 2\OPT(T).
$$

Together, for the root $r$ of $T$,

$$\operatorname{cost}(S^r)   
\le (1+\epsilon') 
\OPT(T)
+
2\epsilon \OPT(T)
 + 3\epsilon (1+\epsilon') \OPT(T),$$
as stated in the lemma.
\end{proof}

After scaling appropriately, Lemmas \ref{lemmaA1}, \ref{lemmaA2}, \ref{lemmaA3}, and \ref{lemmaA4} together show Theorem \ref{thmptas}. This completes the analysis of the PTAS in Section 5.

\newpage

\bibliographystyle{hyperabbrv}
\bibliography{references}

@online{paper,
  title   = {Designing Approximate Binary Trees for Trees},
  author  = {Leon Kellerhals and Mitja Krebs and Andr{\'e} Nichterlein and Stefan Schmid},
  year    = {2026},
  eprint  = {2604.20786},
  eprinttype = {arXiv},
  url     = {https://arxiv.org/abs/2604.20786}
}

@TechReport{repec:ags:eureia:272157,
type={Econometric Institute Archives},
institution={Erasmus University Rotterdam},
author={Johnson, D. S. and Lenstra, J. K. and Rinnooy Kan, A. H. G.},
title={The Complexity Of The Network Design Problem},
year={1977},
number={272157},
doi={10.22004/ag.econ.272157},
url={https://ideas.repec.org/p/ags/eureia/272157.html},
}

@article{wu,
author = {Wu, Bang Ye and Lancia, Giuseppe and Bafna, Vineet and Chao, Kun-Mao and Ravi, Ramamoorthi and Tang, Chuan},
year = {1999},
month = {01},
pages = {761-778},
title = {A Polynomial-Time Approximation Scheme for Minimum Routing Cost Spanning Trees},
volume = {29},
journal = {SIAM J. Comput.},
doi = {10.1145/314613.314628}
}

@article{hu,
author = {Hu, T. C.},
title = {Optimum Communication Spanning Trees},
journal = {SIAM Journal on Computing},
volume = {3},
number = {3},
pages = {188-195},
year = {1974}
}

@article{DAHLHAUS2004255,
title = {A linear-time algorithm to compute a MAD tree of an interval graph},
journal = {Information Processing Letters},
volume = {89},
number = {5},
pages = {255-259},
year = {2004},
issn = {0020-0190},
doi = {https://doi.org/10.1016/j.ipl.2003.11.009},
url = {https://www.sciencedirect.com/science/article/pii/S0020019003005143},
author = {Elias Dahlhaus and Peter Dankelmann and R. Ravi}
}

@article{LIERS201685,
title = {Binary Steiner trees: Structural results and an exact solution approach},
journal = {Discrete Optimization},
volume = {21},
pages = {85-117},
year = {2016},
issn = {1572-5286},
doi = {https://doi.org/10.1016/j.disopt.2016.05.006},
url = {https://www.sciencedirect.com/science/article/pii/S157252861630038X},
author = {Frauke Liers and Alexander Martin and Susanne Pape}
}

@inproceedings{Bartal,
author = {Bartal, Yair},
title = {On approximating arbitrary metrices by tree metrics},
year = {1998},
isbn = {0897919629},
publisher = {Association for Computing Machinery},
address = {New York, NY, USA},
url = {https://doi.org/10.1145/276698.276725},
doi = {10.1145/276698.276725},
booktitle = {Proceedings of the Thirtieth Annual ACM Symposium on Theory of Computing},
pages = {161–168},
numpages = {8},
location = {Dallas, Texas, USA},
series = {STOC '98}
}

@article{FAKCHAROENPHOL2004485,
title = {A tight bound on approximating arbitrary metrics by tree metrics},
journal = {Journal of Computer and System Sciences},
volume = {69},
number = {3},
pages = {485-497},
year = {2004},
note = {Special Issue on STOC 2003},
issn = {0022-0000},
doi = {https://doi.org/10.1016/j.jcss.2004.04.011},
url = {https://www.sciencedirect.com/science/article/pii/S0022000004000637},
author = {Jittat Fakcharoenphol and Satish Rao and Kunal Talwar}
}

@article{dreyfus,
author = {Dreyfus, S. E. and Wagner, R. A.},
title = {The steiner problem in graphs},
journal = {Networks},
volume = {1},
number = {3},
pages = {195-207},
year = {1971}
}

@incollection{karp,
  author    = {Richard M. Karp},
  title     = {Reducibility Among Combinatorial Problems},
  booktitle = {Complexity of Computer Computations},
  editor    = {R. E. Miller and J. W. Thatcher and J. D. Bohlinger},
  series    = {The IBM Research Symposia Series},
  publisher = {Springer},
  year      = {1972},
  pages     = {85--103},
  address   = {Boston, MA},
  doi       = {10.1007/978-1-4684-2001-2_9}
}

@article{CHUNG198443,
title = {On optimal linear arrangements of trees},
journal = {Computers \& Mathematics with Applications},
volume = {10},
number = {1},
pages = {43-60},
year = {1984},
issn = {0898-1221},
author = {F.R.K. Chung}
}

@inproceedings{monien1985complexity,
  title={The complexity of embedding graphs into binary trees},
  author={Monien, Burkhard},
  booktitle={International Conference on Fundamentals of Computation Theory},
  pages={300--309},
  year={1985},
  organization={Springer}
}

@article{mejri2023survey,
  title={A survey on network design problems: main variants and resolution approaches},
  author={Mejri, Imen and Layeb, Safa Bhar and Zeghal, Farah},
  journal={European Journal of Industrial Engineering},
  volume={17},
  number={2},
  pages={253--309},
  year={2023},
  publisher={Inderscience Publishers (IEL)}
}

@article{diaz2002survey,
  title={A survey of graph layout problems},
  author={D{\'\i}az, Josep and Petit, Jordi and Serna, Maria},
  journal={ACM Computing Surveys (CSUR)},
  volume={34},
  number={3},
  pages={313--356},
  year={2002},
  publisher={ACM New York, NY, USA}
}

@techreport{GoldbergKlipker1976,
author = {M. K. Goldberg and I. A. Klipker},
title = {Minimal Placing of Trees on a Line},
institution = {Physico-Technical Institute of Low Temperatures,
Academy of Sciences of the Ukrainian SSR},
year = {1976},
note = {Technical report, USSR (in Russian)}
}

@article{Shiloach,
author = {Shiloach, Yossi},
title = {A Minimum Linear Arrangement Algorithm for Undirected Trees},
journal = {SIAM Journal on Computing},
volume = {8},
number = {1},
pages = {15-32},
year = {1979}
}

\newpage

\appendix 
\section{Linear Run Time Adaptation}
\label{appendix a}

As already mentioned in Remark \ref{remark_linear_runtime}, we can adapt Algorithm \ref{algorithm_trees_to_binary_trees} to get a linear runtime by only dividing the vertices $u_1, \ldots, u_d$ into two sets:
one containing the vertices with more children
than the average, and the other containing those with at most the average number of children.
The adapted algorithm can be seen below. 

\begin{algorithm} 
    \caption{Tree to Binary Tree - Linear Runtime}
    \label{algorithm_trees_to_binary_trees_linear_runtime}
    \renewcommand{\algorithmicrequire}{\textbf{Input:}}
    \renewcommand{\algorithmicensure}{\textbf{Output:}}
    \begin{algorithmic}[1]
    \Require tree $T$
    \Ensure binary tree $S$ with $V(S) = V(T)$ 
        \While{there exists an unprocessed vertex $v \in V(T)$}
            \State \parbox[t]{435pt}{let $v \in V(T)$ be not yet processed with children $u_1, \ldots, u_d$ such that the subtrees \\$S^{u_1},\ldots, S^{u_d}$ have already been build}
            \If{$d=0$}
        \State $S^v\coloneqq (\{v\},\emptyset) $
        \State continue
    \EndIf
            \State create the subtree $S^v_+$ as follows:
            \State \hspace{\algorithmicindent} define $\bar{d}_u^+ \coloneqq \frac{1}{d} \sum_{i = 1}^{d^+(v)} d^+(u_i)$ \Comment{average number of children of $u_1,\ldots, u_d$} \label{algorithm_linear_runtime_definition_average_degree}
            \State \hspace{\algorithmicindent} define the sets $A \coloneqq \setbuild{u_i}{d^+(u_i) > \bar{d}_u^+}$ and $B \coloneqq \setbuild{u_i}{d^+(u_i) \leq \bar{d}_u^+}$
            \State \hspace{\algorithmicindent} delete the edges between $u_i$ and $w_i$ for all $i = 1, \ldots, d$ \label{algorithm_lineartime_creating_Sv+_start}
            \State \hspace{\algorithmicindent} let $v$ be the root of $S^v_+$ with a single child $u_b$, where $u_b$ is some arbitrary vertex in $B$
            \State \hspace{\algorithmicindent} \parbox[t][0pt]{415pt}{starting from $u_b$ add the vertices in the set $B$ such that $(S^v_+)_{u_b}$ is a complete binary\\ tree in $S^v_+$} \label{algorithm_lineartime_creating_Sv+_end}
            \State \hspace{\algorithmicindent} add the vertices $u_a \in A$ to $S^v_+$ such that $(S^v_+)_{u_b}$ remains a complete binary tree
            \ForAll{$u_1, \ldots, u_d$ in the reverse order in which they have been added to $S^v_+$}
                \State let $u_i$ be the vertex currently considered
                \State consider the subtree $(S^v_+)_{u_i}$ of $S^v_+$ which contains $u_i$ and all its descendants in $S^v_+$
                \State add an edge from $w_i$ to some leaf of $(S^v_+)_{u_i}$ which currently has a degree $<3$
            \EndFor
            \State define $S^v$ as the tree created by connecting the subtrees $S^{u_i}-u_i$ to $S^v_+$
        \EndWhile
    \end{algorithmic}
\end{algorithm}

\begin{remark}
    Note that lines 13--16 can be implemented in $O(d)$ time, for example, by using the algorithm described in the proof of Lemma \ref{marble}, with the marble movement performed in depth-first search order.

    Since we do not require the vertices $u_1, \ldots, u_d$ to be sorted w.r.t.\ their degree anymore, but only divide them into the two sets $A$ and $B$, it follows immediately from the observations done in Remark \ref{remark_runtime} that Algorithm \ref{algorithm_trees_to_binary_trees_linear_runtime} has linear runtime.
\end{remark}

\noindent Note that we only need to show that Inequality \eqref{equation_proof_adding_subtree_cost}
\[
\sum_{k = 1}^{\floor{\log_2(d)}} k \sum_{u_i \in L_k(S^v_+)} d^+(u_i) \leq \sum_{k = 1}^{\floor{\log_2(d)}} k \sum_{u_i \in L_k(S^v_+)} \frac{1}{d} \sum_{i=1}^{d} d^+(u_i)
\]
from the proof of Lemma \ref{lemma_adding_subtree_cost} still holds. The rest of the analysis in Sections \ref{section_worst_case_analysis} and \ref{section_convergence_for_large_degrees} remains the same. To this end, let $v \in V(T)$ be some arbitrary vertex with children $u_1, \ldots, u_d$. Reconsider the level sets $L_k(S^v_+)$ defined in Section \ref{section_worst_case_analysis} w.r.t.\ the subtrees $S^v_+$ created in Algorithm \ref{algorithm_trees_to_binary_trees_linear_runtime}. We define the set containing all children of $v$ with a distance of at least $1$ to some leaf in $S^v_+$ by
\[
\bar{L}(S^v_+) \coloneqq \bigcup_{k = 1}^{\floor{\log_2(d)}} L_k(S^v_+).
\]
If $\bar{L}(S^v_+) \subseteq B$, then Inequality \eqref{equation_proof_adding_subtree_cost} holds trivially. If this is not the case, then $L_0(S^v_+) \subseteq A$ and 
\[
\sum_{u_i \in L_0(S^v_+)} d^+(u_i) > \abs{L_0(S^v_+)} \bar{d}_u^+,
\]
where $\bar{d}_u^+ \coloneqq \frac{1}{d} \sum_{i = 1}^{d^+(v)} d^+(u_i)$ as in line \ref{algorithm_linear_runtime_definition_average_degree} of Algorithm \ref{algorithm_trees_to_binary_trees_linear_runtime}. Hence, 
\begin{equation} \label{equation_linear_runtime}
    \sum_{u_i \in \bar{L}(S^v_+)} d^+(u_i) \leq \abs{\bar{L}(S^v_+)} \bar{d}_u^+.
\end{equation}
From Inequality \eqref{equation_linear_runtime}, it follows that we can charge for all $u_i \in A \cap \bar{L}(S^v_+)$ the excess value $d^+(u_i) - \bar{d}_{u}^+$ to the value $d^+(u_j)$ of some vertex (or potentially split it and charge it to multiple vertices) $u_j \in B$, such that after the charging $d^+(u_j) \leq \bar{d}_{u}^+$ still holds.\\ 
By the construction of $S_v$ in lines \ref{algorithm_lineartime_creating_Sv+_start}-\ref{algorithm_lineartime_creating_Sv+_end} of Algorithm \ref{algorithm_trees_to_binary_trees_linear_runtime}, all vertices in the set $B$ have been added to $S^v_+$ first and are hence on a higher or at least the same level as the vertices in $A$. The charging of the excess values to the vertices in $B$ can only increase \eqref{equation_proof_adding_subtree_cost} on the l.h.s.\ and we can conclude that Inequality \eqref{equation_proof_adding_subtree_cost} and consequently Lemma \ref{lemma_adding_subtree_cost} also hold when adding the subtrees $S^{u_i}-u_i$ to $S^v_+$ in Algorithm \ref{algorithm_trees_to_binary_trees_linear_runtime}.

\end{document}